\declaretheoremstyle[%
  headfont=\bfseries,%
  headpunct={:},%
  notefont=\normalfont\bfseries,%
  notebraces={--~}{},
    qed=$\blacksquare$,
]{definitionstyle}
\theoremstyle{definition}
\declaretheorem[style=definitionstyle,name=Definition]{defn}
\declaretheorem[style=definitionstyle,name=Theorem]{thm}
\theoremstyle{definition}
\theoremstyle{plain}
\theoremstyle{remark}
\begin{document}
%
\title{A Weighted Least Squares Error \\ Hetero-functional Graph State Estimator \\ of the American Multi-modal Energy System}
\author{Dakota J. Thompson, Amro M. Farid
\thanks{Dakota Thompson is a Senior Data Scientist at ISO New England.}
\thanks{Amro M. Farid is the Alexander Crombie Humphreys Chair Professor of Economics in Engineering at the Department of Systems Engineering at Stevens Institute of Technology where he is also the director of the Laboratory for Intelligent Integrated Networks of Engineering Systems.  He is also the Principal Systems Scientist for the National Energy Analysis Centre at CSIRO.}}

\date{January, 2024}
\maketitle

\begin{abstract}
As one of the most pressing challenges of the 21st century, global climate change demands a host of changes across at least four critical energy infrastructures: the electric grid, the natural gas system, the oil system, and the coal system.  In the context of the United States,  this paper refers to this system-of-systems as ``The American Multi-Modal Energy System (AMES)".  These combined changes necessitate an understanding of the AMES interdependencies, both structurally and behaviorally, to develop and enact effective policies. This work focuses on behavioral analysis methods to provide examples of how to analyze system behavior and the critical matter and energy flows through the system.  Building upon past works, two regions of the AMES are modeled, and their behavior is analyzed using Hetero-functional Graph Theory (HFGT).  More specifically, the work presents a weighted least square error state estimation model of the AMES.  State estimation has played a major role in the operation and development of the American Electric Power System.  This work extends the state estimation analysis beyond the single-operand electric grid environment into the heterogeneous environment of the AMES.  Employing a data-driven and model-based systems engineering approach in combination with HFGT, a Weighted Least Squares Error Hetero-functional Graph State Estimation (WLSEHFGSE) optimization program is developed to estimate the optimal  flows of mass and energy through the AMES.  This work is the first to integrate state estimation methods with HFGT.  Furthermore, it demonstrates how such a WLSEHFGSE recovers the mass and energy flows in a system-of-systems like the AMES with asset-level granularity.  
\end{abstract}




\section{Introduction} \label{Introduction}
\subsection{Motivation}
As one of the most pressing challenges of the 21st century, global climate change demands a host of changes across at least four critical energy infrastructures: the electric grid, the natural gas system, the oil system, and the coal system.  In the context of the United States,  this paper refers to this system-of-systems as ``The American Multi-Modal Energy System (AMES)".  The demands of global climate change require not just mitigation but also adaptation strategies to meet the global target of less than a $2^{\circ}$C rise by 2050 \cite{Elmqvist:2019:00, IEA:2016:00, Birol:2013:00, Commission:2011:00, Rogelj:2016:00, Obergassel:2016:00,williams:2015:00,williams:2015:01,state-of-california:2017:00, IEA:2017:00}.  Therefore, as policies are developed to drive the sustainable energy transition forward, they must be developed to enable sustainable and resilient architectures with optimal flows of mass and energy between all four major infrastructures.  In effect, the need for decarbonization must be harmonized with the need for economic development, national energy security, and equitable energy access\cite{Moniz:2022:00,IPCC:2023:00,wec:2016:00}.  These combined requirements necessitate an understanding of the AMES' interdependencies, both structurally and behaviorally, to develop and enact effective policies.

To develop this in-depth understanding of the AMES interdependencies, detailed data models of the AMES structural and behavioral properties are required.  Structural analyses provide valuable insights about the degree of structural resilience to disruptions in the energy system; be they due to natural causes or nefarious attacks. Behavioral analyses provide valuable insights about the most critical flows of energy through the system.  By performing these analyses on heterogeneous interdependent models of the AMES as a whole, rather than individual energy systems, greater insights can be gleaned.  For example, a structural resilience analysis can identify the impacts of one energy infrastructure's failure(s) on another, preventing potential cascading failures. A behavioral analysis of interdependent systems can result in more efficient operation of the systems together rather than individually.  This has been shown in multiple instances by pairing the electric grid with other systems such as the energy-water nexus, the electric grid with the natural gas system, or the electric grid with the oil system\cite{Farid:2016:EWN-J29, Lubega:2014:EWN-J11, Lubega:2014:EWN-J12, Thompson:2019:00, Munoz-Hernandez:2013:00, Lubega:2014:EWN-T09, Lubega:2013:EWN-C36, Lubega:2014:EWN-C35, Lubega:2014:EWN-C34, Farid:2013:EWN-C15, Albert:2004:00, Dong:2015:00, Jean-Baptiste:2003:00, Kriegler:2018:00, Robert-Lempert:2019:00, Rogers:2013:00, Lara:2020:00, Moreno:2017:00}. 

Recognizing the need for data-driven multi-energy system models, the (American) National Science Foundation (NSF) put forth a call for ``research to develop and make available simulated and synthetic data on interdependent critical infrastructures (ICIs), and thus to improve understanding and performance of these systems"\cite{Johnson:2017:00}.  The NSF project entitled ``American Multi-Modal Energy System Synthetic \& Simulated Data (AMES-3D)" seeks to fill this void with an open-source structural and behavioral model of the AMES.  Adhering to a strong theoretical foundation in Model-Based Systems Engineering (MBSE) \cite{Dori:2015:00, Friedenthal:2011:00}, the AMES-3D project aims to develop a physically informed and data-driven model of the AMES structure and behavior that captures the interdependencies between subsystems with asset-level granularity.  The developed model must support both structural and behavioral analyses, allowing the results to be compared and contrasted.  To that end, Hetero-Functional Graph Theory (HFGT) \cite{Thompson:2023:07, Schoonenberg:2019:ISC-BK04} is employed to transform MBSE models (expressed in SysML \ cite {Weilkiens:2007:00, Friedenthal:2011:00}) into mathematical models that directly support the quantitative analysis of system structure and behavior.  As previous works focus on the structural analysis of the AMES \cite{Thompson:2021:00, Thompson:2022:00}, this paper focuses its attention on behavioral analysis.

Data-driven behavioral analyses of energy systems have largely been predicated on state estimation. In electric power systems, after the 1965 New York City blackout\cite{Hull:2011:00}, there was a recognition that grid operators needed greater situational awareness of power flows and voltage profiles.  State estimation consequently emerged as a foundational tool upon which grid codes, operating procedures, and real-time electricity markets are built\cite{Abur:2004:00,wu:1990:00,ashok:2012:00,shivakumar:2008:00}.  Similarly, the natural gas and oil systems have adopted state estimators to facilitate the operation of their pipeline networks\cite{Jalving:201800, Behrooz:2015:00, Gong:2007:00}.  Despite these siloed advances in state estimation in individual energy sectors, an asset-level and holistic understanding of multi-energy system behavior remains elusive.  As the energy transition advances with a host of changes across all four critical infrastructures and becomes increasingly more interdependent, the sector-specific state estimators must be generalized, converged, and integrated to address the American Multi-modal Energy System as a whole.  Specifically, this work presents a weighted least square error, hetero-functional graph, state estimation model of the American Multi-modal Energy System. 

\subsection{Literature Review}
While there have been attempts to model multi-energy systems and their large-scale, heterogeneous energy flows, the field remains relatively nascent \cite{Albert:2004:00, Dong:2015:00, Jean-Baptiste:2003:00, Kriegler:2018:00, Robert-Lempert:2019:00, Rogers:2013:00, Lara:2020:00, Moreno:2017:00, Mejia-Giraldo:2012:00}.  Initially, models were developed in response to the 1973 oil crisis and have since sought to mitigate global climate change through decarbonization pathways that introduce new energy streams, such as hydrogen, synthetic fuels, biofuels, and other renewable energy sources.  Despite demonstrating many practical benefits, these works introduce their own weaknesses, including the lack of asset-level granularity, the difficulty of use, and the presence of geographical specificity that encourages one-off use cases that may not be applicable to other regions.  The overwhelming majority of energy system models focus on a single energy network \cite{masters:2013:00,mokhatab:2012:00,Lurie:2009:00,EIA:2014:11,Priyanka:2021:00}.   More recently, work has been published that analyzes only a pair of systems together, such as coupling the electric grid with one of the other fossil fuel systems in the AMES\cite{Albert:2004:00, Dong:2015:00, Jean-Baptiste:2003:00, Kriegler:2018:00, Robert-Lempert:2019:00, Rogers:2013:00, Lara:2020:00, Moreno:2017:00}.  These works, however, neither include all four critical energy infrastructures nor do they extend to the entire American geography.  As an exception, the EIA developed a comprehensive model called the National Energy Modeling System (NEMS), which it uses to produce the (American) Annual Energy Outlook\cite{EIA:2020:00}.  Despite serving this important function and being publicly available, this software tool remains opaque and difficult to use.  The EIA website itself recognizes:``[The] NEMS is only used by a few organizations outside of the EIA. Most people who have requested NEMS in the past have found out that it was too difficult or rigid to use \cite{EIA:2017:01}".  Consequently, holistic multi-energy system models of the AMES remain a present need for open-source, citizen-science to inform sustainable energy policies.

\begin{figure}[h!]
\begin{center}
\includegraphics[width=0.95\textwidth]{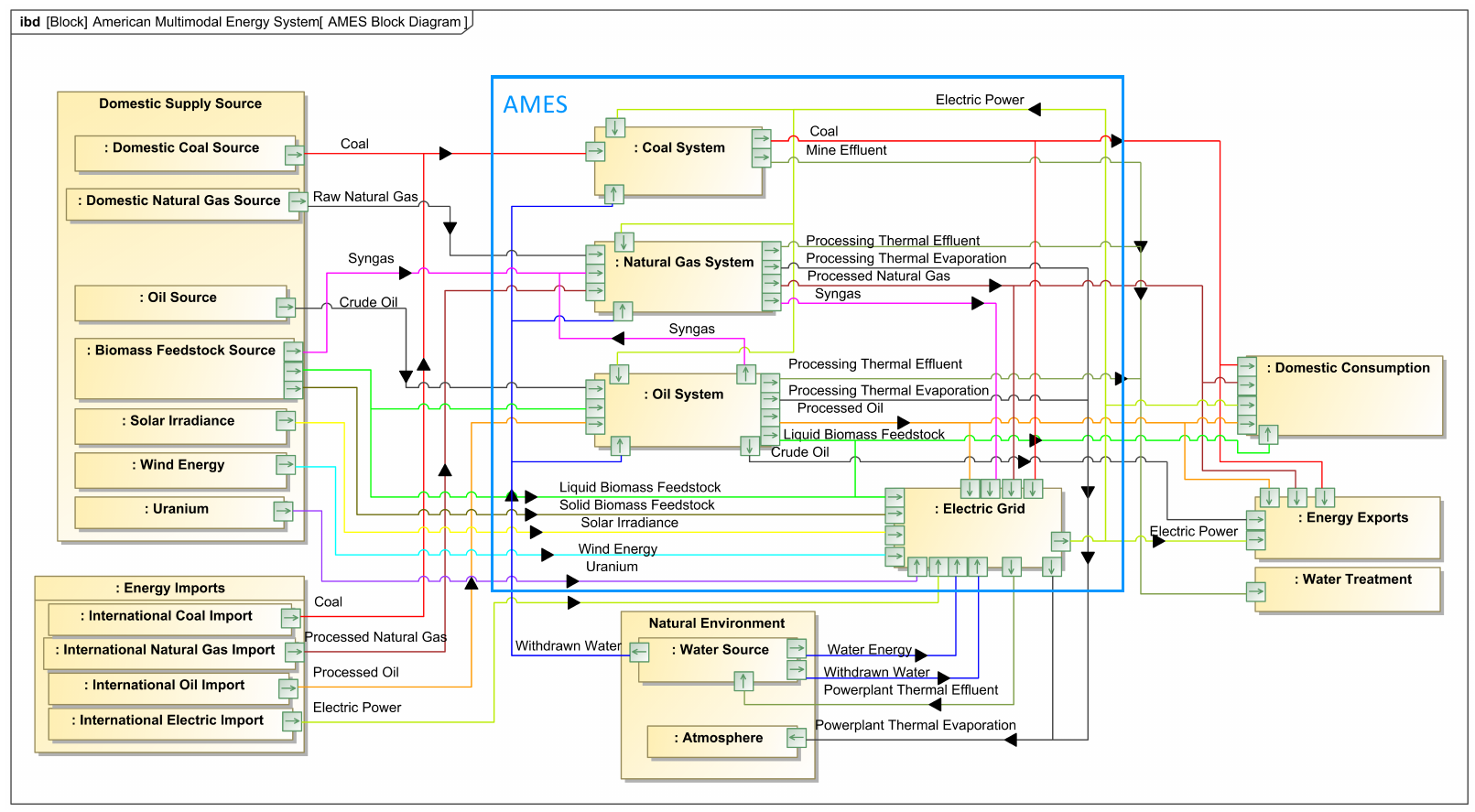}
\vspace{-0.1in}
\caption{The top-level context diagram of the AMES.  The domestic supply sources, energy imports, natural environment, domestic consumption, energy exports, and water treatment are external to the AMES' four subsystems of coal, natural gas, oil, and electric grid\cite{Thompson:2023:00}.}
\label{Fig:AMES-ContextDiagram}
\end{center}
\end{figure}

\begin{figure}[h!]\begin{center}\includegraphics[width=0.95\textwidth]{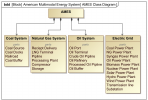}
\vspace{-0.1in}\caption{The top-level block definition diagram of the AMES.  The coal, oil, natural gas, and electric power systems are further decomposed into their constituent assets for further analysis\cite{Thompson:2023:00}.}\label{Fig:AMES-ClassDiagram}\end{center}\end{figure}

In order to manage the highly heterogeneous nature of the interactions within and between the AMES' component systems, the AMES-3D project has produced a reference architecture of the AMES \cite{Thompson:2023:00}.  It utilizes asset-level GIS data and openly available EIA datasets\cite {Thompson:2023:00} to identify and organize resources (i.e., facilities), processes, and interactions within and between the AMES' four subsystems.  The reference architecture is stated in the Systems Modeling Language (SysML) \cite{Friedenthal:2011:00} and explicitly depicts the flows of mass and energy.  Figure \ref{Fig:AMES-ContextDiagram} shows the top-level context diagram of AMES reference architecture, while Figure \ref{Fig:AMES-ClassDiagram} shows the top-level block-definition diagram\cite{Thompson:2023:00}.  These are then decomposed to elaborate the system form, function, and concept as the allocation of the latter onto the former\cite{Dori:2015:00, Friedenthal:2011:00}.  Unlike ``reference energy systems" that have been used in some national energy system optimization models\cite{PlazasNino:2022:00, Millot:2020:00, Dedinec:2016:00, Yangka:2016:00}, the development of a reference architecture for the AMES establishes this work on a solid MBSE foundation.  

To maintain self-consistency, the data sets used to infer the reference architecture (in SysML) are also used to produce the AMES' instantiated architecture as a mathematical model\cite{Thompson:2023:00, Thompson:2022:00}.  Hetero-functional graph theory (HFGT) provides a means by which to translate a SysML-based reference architecture to a mathematics-based instantiated architecture\cite{Thompson:2020:SPG-C68, Thompson:2021:00, Thompson:2022:00, Thompson:2022:ISC-C78}.  More specifically, it quantitatively interprets the SysML descriptions of system form, function, and concept into hetero-functional graphs.  Unlike traditional (i.e., formal) graphs, where noun-nodes (e.g., point-based energy facilities) are connected via noun-edges (e.g., line-based energy facilities), hetero-functional graphs (HFG) have interconnected nodes composed of subject-verb-object sentences called capabilities.  The serial and parallel connection of these capabilities reveals a ``story" where a wide variety of operands are transformed and transported as a flow through the HFG.  HFGs have been shown to provide more information than formal graphs in small-scale electric power distribution systems \cite{Thompson:2020:SPG-C68}, the entire American electric power system\cite{Thompson:2021:00}, and ultimately the entire AMES.  The last of these emphasizes how these interdependencies vary across the vast geography of the AMES\cite{Thompson:2023:ISC-J53,Thompson:2024:ISC-J55}.  The development of the AMES instantiated architecture from the reference architecture provides a solid foundation from which to conduct structural analyses \cite{Thompson:2023:00, Thompson:2022:00}.  While these past works have focused on the structural analysis of the AMES, the literature (until now) has yet to produce a commensurate behavioral analysis.  

\subsection{Original Contributions}
This paper uses a data-driven, MBSE-guided approach to develop the first open-source, asset-level granularity, behavioral model of the entire American Multi-Modal Energy System.  More specifically, socio-economic data on the flows of mass and energy through the AMES from the EIA are applied to the AMES reference architecture\cite{Thompson:2023:00} and instantiated architecture \cite{Thompson:2022:00} to create a behavioral model of the AMES.  The instantiated structural model includes the electric grid, the natural gas system, the oil system, the coal system, and the interconnections between them as defined by the AMES reference architecture, for the full contiguous United States of America (USA).  A Weighted Least Squares Error Hetero-functional Graph State Estimation (WLSEHFGSE) optimization is then developed to model the flows of mass and energy through AMES from openly available EIA data.  This WLSEHFGSE is presented as a formal generalization of the hetero-functional network minimum cost flow (HFNMCF) problem (introduced in Sec. \ref{Sec:HFNMCF}) and the weighted-least square error electric power system state estimator (WLSEEPSSE) problem (introduced in Sec.\ref{Sec:WLSEPSSE}). This WLSEHFGSE provides the first behavioral analysis of the entire AMES with an asset-level granularity on a monthly time scale.  It is fully extensible with respect to measured input datasets and it is capable of reconciling redundant datasets with conflicting information.  Ultimately, this behavioral model of the AMES allows direct comparisons in future studies between the structural \cite{Thompson:2022:00} and behavioral analyses.  

\subsection{Outline}

The remainder of the paper proceeds as follows.  Section \ref{BackgroundCh7} is a description of the background literature, which focuses on three areas: Hetero-Functional Graph Theory, 
Hetero-Functional Network Minimum Cost Flow Optimization (HFNMCF), and electric power systems state estimation.  Although this work makes every effort to be self-contained, background knowledge of MBSE\cite{Friedenthal:2011:00, Dori:2015:00}, HFGT\cite{Schoonenberg:2019:ISC-BK04,Farid:2022:ISC-J51}, and electric power system state estimation\cite{Monticelli:2012:00} is helpful.  Next, Sec. \ref{Sec:Methdology} presents the WLSEHFGSE as a formal generalization of the HFNMCF problem that admits measurement error of exogenous data.  Sec. \ref{Sec:WLSEHFGSE-AMES} then applies to the WLSEHFGSE to the AMES; taking into account its specific physical characteristics.   The section continues to identify all the data sets that are utilized to formulate the constraints included in the WLSEHFGSE optimization.  Section \ref{Sec:Results} then presents the case study results of the AMES' Western Interconnect and Eastern Regions.  Finally, the paper is brought to a conclusion in Section \ref{ConclusionsCh7}.

\section{Background}\label{BackgroundCh7}
The original contribution presented in this paper builds upon hetero-functional graph theory (HFGT), hetero-functional network minimum cost flow optimization (HFNMCF), and weighted-least square error electric power system state estimation (WLSEEPSSE). HFGT is based upon a highly descriptive set of ontological elements that describe a system's form, function, and concept as the allocation of the latter onto the former\cite{Dori:2015:00, Friedenthal:2011:00}\cite{Schoonenberg:2019:ISC-BK04}.  The heterogeneity of these elements facilitates direct application to complex systems-of-systems\cite{Schoonenberg:2019:ISC-BK04}.  Key definitions in HFGT are provided in Section \ref{Sec:HFGTDefinitions}.  HFNMCF optimizes the time-dependent flow and storage of multiple operands (or commodities) between buffers, allows for their transformation from one operand to another, and tracks the state of these operands.  It is presented in Section \ref{Sec:HFNMCF}.  Finally, WLSEEPSSE is presented as a single commodity state estimator that can be generalized to multi-energy systems.  It is presented in Section \ref{Sec:WLSEPSSE}.  
\subsection{Hetero-functional Graph Theory Key Definitions}\label{Sec:HFGTDefinitions}
This section provides key definitions from hetero-functional graph theory to support the methodological development of a WLSEHFGSE (in Sec. \ref{Sec:Methdology}).  For a more in-depth description of HFGT, readers are directed to past works\cite{Schoonenberg:2019:ISC-BK04, Thompson:2020:SPG-C68, Thompson:2021:00}, including the definition of the AMES reference architecture \cite{Thompson:2023:00}, and the development of the AMES' instantiated architecture using HFGT \cite{Thompson:2022:00}.  

\begin{defn}[System Operand \cite{SE-Handbook-Working-Group:2015:00}]\label{Defn:SO}
An asset or object $l_i \in L$ that is operated on or consumed during the execution of a process.  
\end{defn}

\begin{defn}[System Process\cite{Hoyle:1998:00,SE-Handbook-Working-Group:2015:00}]\label{def:CH7:process}
An activity $p \in P$ that transforms or transports a predefined set of input operands into a predefined set of outputs. 
\end{defn}

\begin{defn}[System Resource \cite{SE-Handbook-Working-Group:2015:00}]\label{Defn:SR}
An asset or object $r_v \in R$ that facilitates the execution of a process.  
\end{defn}

\begin{defn}[Buffer\cite{Schoonenberg:2019:ISC-BK04}]\label{defn:BSCh7}
    A resource $r_v \in R$ is a buffer $b_s \in B_S$ iff it is capable of storing or transforming one or more operands at a unique location in space.  
\end{defn}
\noindent Additionally, these resources are capable of one or more system processes including transform operands, hold operands, and transport operands \cite{Schoonenberg:2019:ISC-BK04}. 


\begin{defn}[Capability\cite{Farid:2006:IEM-C02,Farid:2007:IEM-TP00,Farid:2008:IEM-J05,Farid:2008:IEM-J04,Farid:2015:ISC-J19,Farid:2016:ISC-BC06}]\label{defn:capabilityCh7}
An action $e_{wv} \in {\cal E}_S$ (in the SysML sense) defined by a system process $p_w \in P$ being executed by a resource $r_v \in R$.  It constitutes a subject + verb + operand sentence of the form: ``Resource $r_v$ does process $p_w$".  
\end{defn}
\noindent In the context of the AMES, the operands are flows of matter and energy like coal, oil, natural gas, and electric power.  The resources include energy conversion and transportation facilities, such as electric power plants, refineries, electric power lines, and natural gas pipelines.  Of these, the buffers are the point facilities, such as electric power plants and refineries.  The capabilities are expressed in ``subject + verb + object" sentences, such as ``NG refinery refines raw natural gas."

The flow of operands through buffers and capabilities is described by the positive and negative hetero-functional incidence tensors (HFIT).

\begin{defn}[The Negative 3$^{rd}$ Order Hetero-functional Incidence Tensor (HFIT) $\widetilde{\cal M}_\rho^-$]\cite{Farid:2022:ISC-J49,Thompson:2022:ISC-C80} \label{Defn:N3OHFIT}
The negative hetero-functional incidence tensor $\widetilde{\cal M_\rho}^- \in \{0,1\}^{|L|\times |B_S| \times |{\cal E}_S|}$  is a third-order tensor whose element $\widetilde{\cal M}_\rho^{-}(i,y,\psi)=1$ when the system capability ${\epsilon}_\psi \in {\cal E}_S$ pulls operand $l_i \in L$ from buffer $b_{s_y} \in B_S$.
\end{defn} 

\begin{defn}[The Positive  3$^{rd}$ Order Hetero-functional Incidence Tensor (HFIT)$\widetilde{\cal M}_\rho^+$]\cite{Farid:2022:ISC-J49,Thompson:2022:ISC-C80} \label{Defn:P3OHFIT}
The positive hetero-functional incidence tensor $\widetilde{\cal M}_\rho^+ \in \{0,1\}^{|L|\times |B_S| \times |{\cal E}_S|}$  is a third-order tensor whose element $\widetilde{\cal M}_\rho^{+}(i,y,\psi)=1$ when the system capability ${\epsilon}_\psi \in {\cal E}_S$ injects operand $l_i \in L$ into buffer $b_{s_y} \in B_S$.
\end{defn}
\noindent These incidence tensors are straightforwardly ``matricized" to form 2$^{nd}$ Order Hetero-functional Incidence Matrices $M = M^+ - M^-$ with dimensions $|L||B_S|\times |{\cal E}|$. Consequently, the supply, demand, transportation, storage, transformation, assembly, and disassembly of multiple operands in distinct locations over time can be described by an Engineering System Net and its associated State Transition Function\cite{Farid:2022:ISC-J49, Schoonenberg:2017:IEM-J34}.

\begin{defn}[Engineering System Net]\label{Defn:ESN}
An elementary Petri net ${\cal N} = \{S, {\cal E}_S, \textbf{M}, W, Q\}$, where
\begin{itemize}
    \item $S$ is the set of places with size: $|L||B_S|$,
    \item ${\cal E}_S$ is the set of transitions with size: $|{\cal E}|$,
    \item $\textbf{M}$ is the set of arcs, with the associated incidence matrices: $M = M^+ - M^-$,
    \item $W$ is the set of weights on the arcs, as captured in the incidence matrices,
    \item $Q=[Q_B; Q_E]$ is the marking vector for both the set of places and the set of transitions. 
\end{itemize}
\end{defn}

\begin{defn}[Engineering System Net State Transition Function\cite{schoonenberg:2021:00}]\label{Defn:ESN-STF}
The  state transition function of the engineering system net $\Phi()$ is:
\begin{equation}\label{CH6:eq:PhiCPN}
Q[k+1]=\Phi(Q[k],U^-[k], U^+[k]) \quad \forall k \in \{1, \dots, K\}
\end{equation}
where $k$ is the discrete time index, $K$ is the simulation horizon, $Q=[Q_{B}; Q_{\cal E}]$, $Q_B$ has size $|L||B_S| \times 1$, $Q_{\cal E}$ has size $|{\cal E}_S|\times 1$, the input firing vector $U^-[k]$ has size $|{\cal E}_S|\times 1$, and the output firing vector $U^+[k]$ has size $|{\cal E}_S|\times 1$.  
\begin{align}\label{CH6:CH6:eq:Q_B:HFNMCFprogram}
Q_{B}[k+1]&=Q_{B}[k]+{M}^+U^+[k]\Delta T-{M}^-U^-[k]\Delta T \\ \label{CH6:CH6:eq:Q_E:HFNMCFprogram}
Q_{\cal E}[k+1]&=Q_{\cal E}[k]-U^+[k]\Delta T +U^-[k]\Delta T
\end{align}
where $\Delta T$ is the duration of the simulation time step.  This state transition function is incorporated directly into the HFNMCF problem in Sec. \ref{Sec:HFNMCF}.  
\end{defn}

Additionally, each operand can have its own state and evolution.  This behavior is described by an Operand Net and its associated State Transition Function for each operand.  
\begin{defn}[Operand Net\cite{Farid:2014:ISC-C37,Farid:2014:ISC-C38,Farid:2015:ISC-J19,Khayal:2017:ISC-J35,Schoonenberg:2017:IEM-J34}]\label{Defn:OperandNet} Given operand $l_i$, an elementary Petri net ${\cal N}_{l_i}= \{S_{l_i}, {\cal E}_{l_i}, \textbf{M}_{l_i}, W_{l_i}, Q_{l_i}\}$ where 
\begin{itemize}
\item $S_{l_i}$ is the set of places describing the operand's state.  
\item ${\cal E}_{l_i}$ is the set of transitions describing the evolution of the operand's state.
\item $\textbf{M}_{l_i} \subseteq (S_{l_i} \times {\cal E}_{l_i}) \cup ({\cal E}_{l_i} \times S_{l_i})$ is the set of arcs, with the associated incidence matrices: $M_{l_i} = M^+_{l_i} - M^-_{l_i} \quad \forall l_i \in L$.  
\item $W_{l_i} : \textbf{M}_{l_i}$ is the set of weights on the arcs, as captured in the incidence matrices $M^+_{l_i},M^-_{l_i} \quad \forall l_i \in L$.  
\item $Q_{l_i}= [Q_{Sl_i}; Q_{{\cal E}l_i}]$ is the marking vector for both the set of places and the set of transitions. 
\end{itemize}
\end{defn}

\begin{defn}[Operand Net State Transition Function]\label{Defn:OperandNet-STF}
The  state transition function of each operand net $\Phi_{l_i}()$ is:
\begin{equation}\label{CH6:eq:PhiSPN}
Q_{l_i}[k+1]=\Phi_{l_i}(Q_{l_i}[k],U_{l_i}^-[k], U_{l_i}^+[k]) \quad \forall k \in \{1, \dots, K\} \quad i \in \{1, \dots, L\}
\end{equation}
where $Q_{l_i}=[Q_{Sl_i}; Q_{{\cal E} l_i}]$, $Q_{Sl_i}$ has size $|S_{l_i}| \times 1$, $Q_{{\cal E} l_i}$ has size $|{\cal E}_{l_i}| \times 1$, the input firing vector $U_{l_i}^-[k]$ has size $|{\cal E}_{l_i}|\times 1$, and the output firing vector $U^+[k]$ has size $|{\cal E}_{l_i}|\times 1$.  

\begin{align}\label{X}
Q_{Sl_i}[k+1]&=Q_{Sl_i}[k]+{M_{l_i}}^+U_{l_i}^+[k]\Delta T - {M_{l_i}}^-U_{l_i}^-[k]\Delta T \\ \label{CH6:CH eq:Q_E:HFNMCFprogram}
Q_{{\cal E} l_i}[k+1]&=Q_{{\cal E} l_i}[k]-U_{l_i}^+[k]\Delta T +U_{l_i}^-[k]\Delta T
\end{align}
\end{defn}
\noindent This state transition function is incorporated directly into the MHFNCF problem in Sec. \ref{Sec:HFNMCF}.  

\subsection{The Hetero-Functional Network Minimum Cost Flow Optimization Problem (HFNMCF)}\label{Sec:HFNMCF} 
This section serves to introduce the reader to the HFNMCF problem advanced in 2021 and demonstrated on an integrated hydrogen-natural gas system\cite{schoonenberg:2021:00, Schoonenberg:2017:IEM-J34}.  As mentioned previously, the strategy for deriving the WLSEHFGSE (in Sec. \ref{Sec:Methdology}) is to recognize that it is a formal generalization of the HFNMCF problem. It optimizes the time-dependent flow and storage of multiple operands (or commodities) between buffers, allows for their transformation from one operand to another, and tracks the state of these operands.  In this regard, it is a very flexible optimization problem that applies to a wide variety of complex engineering systems.  For the purposes of this paper, the HFNMCF is a type of discrete-time-dependent, time-invariant, convex optimization program.

\vspace{0.2in}
\begin{align}\label{Eq:ObjFunc1}
\text{minimize } Z &= \sum_{k=1}^{K-1} f_k(x[k],y[k]) \\ \label{Eq:EqualityConstraints}
\text{s.t. } A_{CP}X &= B_{CP} \\ \label{ch6:eq:QPcanonicalform:3}
D_{CP}X &\leq E_{CP} \\ \label{Eq:DeviceModels}
g(X,Y)&=0 \\
h(Y) &\leq 0
\end{align}
where 
\begin{itemize}
\item $Z$ is a convex objective function separable in $k$.
\item $k$ is the discrete time index. 
\item $K$ is the simulation horizon.
\item $f()$ is a set of discrete-time-dependent convex functions.
\item $X=\left[x[1]; \ldots; x[K]\right]$  is the vector of primary decision variables at time $k$.
\begin{equation}
x[k] = \begin{bmatrix} Q_B ; Q_{\cal E} ; Q_{SL} ; Q_{{\cal E}L} ; U^- ; U^+ ; U^-_L ; U^+_L \end{bmatrix}[k] \quad \forall k \in \{1, \dots, K\}
\end{equation}
As the next section elaborates, the domain of $x[k]$ depends on the specific problem application and may include positive and negative reals and/or positive integers.
\item $Y=\left[y[1]; \ldots; y[K]\right]$  is the vector of auxiliary decision variables at time $k$.  The need for auxiliary decision variables depends on the presence and nature of the device models $g(X,Y)=0$ in Equation \ref{Eq:DeviceModels}.
\item $A_{CP}$ is the linear equality constraint coefficient matrix.
\item $B_{CP}$ is the linear equality constraint intercept vector.
\item $D_{CP}$ is the linear inequality constraint coefficient matrix.
\item $E_{CP}$ is the linear inequality constraint intercept vector.  
\item g(X,Y) and h(Y) are a set of (optional) device model functions whose presence and nature depend on the specific problem application.  As they are not further required in this paper, the interested reader is referred to \cite{schoonenberg:2021:00} for a deeper explanation.   
\end{itemize}
Equations \ref{Eq:ObjFunc1}-\ref{ch6:eq:QPcanonicalform:3} are elaborated in their respective subsections below.  

\vspace{0.1in}
\subsubsection{Objective Function}
With respect to the objective function in Eq.  \ref{Eq:ObjFunc1}, $Z$ is a convex objective function separable in $k$.  For the remainder of this work, the discrete-time-dependent functions $f_k$ are assumed to be time-invariant quadratic functions.  Matrix $F_{QP}$ and vector $f_{QP}$ in Equation \ref{Eq:ObjFunc} allow quadratic and linear costs to be incurred from the place and transition markings in both the engineering system net and operand nets. 
\begin{align}\label{Eq:ObjFunc}
Z &= \sum_{k=1}^{K-1} x^T[k] F_{QP} x[k] + f_{QP}^T x[k]
\end{align}
\begin{itemize}
\item $F_{QP}$ is a positive semi-definite, diagonal, quadratic coefficient matrix.
\item $f_{QP}$ is a linear coefficient matrix.
\end{itemize}

\vspace{0.5cm}
\subsubsection{Equality Constraints}

Matrix $A_{QP}$ and vector $B_{QP}$ in Equation \ref{Eq:EqualityConstraints} are constructed by concatenating constraints Equations  \ref{Eq:ESN-STF1}-\ref{CH6:eq:HFGTprog:comp:Fini}.  

\begin{align}\label{Eq:ESN-STF1}
-Q_{B}[k+1]+Q_{B}[k]+{M}^+U^+[k]\Delta T - {M}^-U^-[k]\Delta T=&0 && \!\!\!\!\!\!\!\!\!\!\!\!\!\!\!\!\!\!\!\!\!\!\!\!\!\!\!\!\!\!\!\!\!\!\!\!\!\!\!\!\!\forall k \in \{1, \dots, K\}\\  \label{Eq:ESN-STF2}
-Q_{\cal E}[k+1]+Q_{\cal E}[k]-U^+[k]\Delta T + U^-[k]\Delta T=&0 && \!\!\!\!\!\!\!\!\!\!\!\!\!\!\!\!\!\!\!\!\!\!\!\!\!\!\!\!\!\!\!\!\!\!\!\!\!\!\!\!\!\forall k \in \{1, \dots, K\}\\ \label{Eq:DurationConstraint}
 - U_\psi^+[k+k_{d\psi}]+ U_{\psi}^-[k] = &0 && \!\!\!\!\!\!\!\!\!\!\!\!\!\!\!\!\!\!\!\!\!\!\!\!\!\!\!\!\!\!\!\!\!\!\!\!\!\!\!\!\!\forall k\in \{1, \dots, K\} \quad \psi \in \{1, \dots, {\cal E}_S\}\\ \label{Eq:OperandNet-STF1}-Q_{Sl_i}[k+1]+Q_{Sl_i}[k]+{M}_{l_i}^+U_{l_i}^+[k]\Delta T - {M}_{l_i}^-U_{l_i}^-[k]\Delta T=&0 && \!\!\!\!\!\!\!\!\!\!\!\!\!\!\!\!\!\!\!\!\!\!\!\!\!\!\!\!\!\!\!\!\!\!\!\!\!\!\!\!\!\forall k \in \{1, \dots, K\} \quad i \in \{1, \dots, |L|\}\\ \label{Eq:OperandNet-STF2}
-Q_{{\cal E}l_i}[k+1]+Q_{{\cal E}l_i}[k]-U_{l_i}^+[k]\Delta T + U_{l_i}^-[k]\Delta T=&0 && \!\!\!\!\!\!\!\!\!\!\!\!\!\!\!\!\!\!\!\!\!\!\!\!\!\!\!\!\!\!\!\!\!\!\!\!\!\!\!\!\!\forall k \in \{1, \dots, K\} \quad i \in \{1, \dots, |L|\}\\ \label{Eq:OperandNetDurationConstraint}
- U_{xl_i}^+[k+k_{dxl_i}]+ U_{xl_i}^-[k] = &0 &&  \!\!\!\!\!\!\!\!\!\!\!\!\!\!\!\!\!\!\!\!\!\!\!\!\!\!\!\!\!\!\!\!\!\!\!\!\!\!\!\!\!
\forall k\in \{1, \dots, K\}, \: \forall x\in \{1, \dots, |{\cal E}_{l_i}\}|, \: l_i \in \{1, \dots, |L|\}\\ \label{Eq:SyncPlus}
U^+_L[k] - \widehat{\Lambda}^+ U^+[k] =&0 && \!\!\!\!\!\!\!\!\!\!\!\!\!\!\!\!\!\!\!\!\!\!\!\!\!\!\!\!\!\!\!\!\!\!\!\!\!\!\!\!\!\forall k \in \{1, \dots, K\}\\ \label{Eq:SyncMinus}
U^-_L[k] - \widehat{\Lambda}^- U^-[k] =&0 && \!\!\!\!\!\!\!\!\!\!\!\!\!\!\!\!\!\!\!\!\!\!\!\!\!\!\!\!\!\!\!\!\!\!\!\!\!\!\!\!\!\forall k \in \{1, \dots, K\}\\ \label{CH6:eq:HFGTprog:comp:Bound}
\begin{bmatrix}
D_{Up} & \mathbf{0} \\ \mathbf{0} & D_{Un}
\end{bmatrix} \begin{bmatrix}
U^+ \\ U^-
\end{bmatrix}[k] =& \begin{bmatrix}
C_{Up} \\ C_{Un}
\end{bmatrix}[k] && \!\!\!\!\!\!\!\!\!\!\!\!\!\!\!\!\!\!\!\!\!\!\!\!\!\!\!\!\!\!\!\!\!\!\!\!\!\!\!\!\!\forall k \in \{1, \dots, K\} \\\label{Eq:OperandRequirements}
\begin{bmatrix}
E_{Lp} & \mathbf{0} \\ \mathbf{0} & E_{Ln}
\end{bmatrix} \begin{bmatrix}
U^+_{l_i} \\ U^-_{l_i}
\end{bmatrix}[k] =& \begin{bmatrix}
F_{Lpi} \\ F_{Lni}
\end{bmatrix}[k] && \!\!\!\!\!\!\!\!\!\!\!\!\!\!\!\!\!\!\!\!\!\!\!\!\!\!\!\!\!\!\!\!\!\!\!\!\!\!\!\!\!\forall k \in \{1, \dots, K\}\quad i \in \{1, \dots, |L|\} \\\label{CH6:eq:HFGTprog:comp:Init} 
\begin{bmatrix} Q_B ; Q_{\cal E} ; Q_{SL} \end{bmatrix}[1] =& \begin{bmatrix} C_{B1} ; C_{{\cal E}1} ; C_{{SL}1} \end{bmatrix} \\ \label{CH6:eq:HFGTprog:comp:Fini}
\begin{bmatrix} Q_B ; Q_{\cal E} ; Q_{SL} ; U^- ; U_L^- \end{bmatrix}[K+1] =   &\begin{bmatrix} C_{BK} ; C_{{\cal E}K} ; C_{{SL}K} ; \mathbf{0} ; \mathbf{0} \end{bmatrix}
\end{align}

\begin{itemize}
\item Equations \ref{Eq:ESN-STF1} and \ref{Eq:ESN-STF2} describe the state transition function of an engineering system net (Defn \ref{Defn:ESN} \& \ref{Defn:ESN-STF}).
\item Equation \ref{Eq:DurationConstraint} is the engineering system net transition duration constraint where the end of the $\psi^{th}$ transition occurs $k_{d\psi}$ time steps after its beginning. 
\item Equations \ref{Eq:OperandNet-STF1} and \ref{Eq:OperandNet-STF2} describe the state transition function of each operand net ${\cal N}_{l_i}$ (Defn. \ref{Defn:OperandNet} \& \ref{Defn:OperandNet-STF}) associated with each operand $l_i \in L$.  
\item Equation \ref{Eq:OperandNetDurationConstraint} is the operand net transition duration constraint where the end of the $x^{th}$ transition occurs $k_{dx_{l_i}}$ time steps after its beginning. 
\item Equations \ref{Eq:SyncPlus} and \ref{Eq:SyncMinus} are synchronization constraints that couple the input and output firing vectors of the engineering system net to the input and output firing vectors of the operand nets respectively. $U_L^-$ and $U_L^+$ are the vertical concatenations of the input and output firing vectors $U_{l_i}^-$ and $U_{l_i}^+$, respectively.
\begin{align}
U_L^-[k]&=\left[U^-_{l_1}; \ldots; U^-_{l_{|L|}}\right][k] \\
U_L^+[k]&=\left[U^+_{l_1}; \ldots; U^+_{l_{|L|}}\right][k]
\end{align}
\item Equations \ref{CH6:eq:HFGTprog:comp:Bound} and \ref{Eq:OperandRequirements} are boundary conditions.  Eq. \ref{CH6:eq:HFGTprog:comp:Bound} is a boundary condition constraint that allows some of the engineering system net firing vectors decision variables to be set to an exogenous constant.  Eq. \ref{Eq:OperandRequirements} is a boundary condition constraint that allows some of the operand net firing vector decision variables to be set to an exogenous constant.  
\item Equations \ref{CH6:eq:HFGTprog:comp:Init} and \ref{CH6:eq:HFGTprog:comp:Fini} are the initial and final conditions of the engineering system net and the operand nets where $Q_{SL}$ is the vertical concatenation of the place marking vectors of the operand nets $Q_{Sl_i}$.
\begin{align}
Q_{SL}^-[k]&=\left[Q^-_{Sl_1}; \ldots; U^-_{Sl_{|L|}}\right][k] \\
U_{SL}^+[k]&=\left[U^+_{Sl_1}; \ldots; U^+_{Sl_{|L|}}\right][k]
\end{align}
\end{itemize}

\vspace{0.5cm}
\subsubsection{Inequality Constraints}
Matrix $D_{QP}$ and vector $E_{QP}$ in Equation \ref{ch6:eq:QPcanonicalform:3} place capacity constraints on the vector of decision variables at each time step $x[k] = \begin{bmatrix} Q_B ; Q_{\cal E} ; Q_{SL} ; Q_{{\cal E}L} ; U^- ; U^+ ; U^-_L ; U^+_L \end{bmatrix}[k] \quad \forall k \in \{1, \dots, K\}$. This flexible formulation allows capacity constraints on the place and transition markings in both the engineering system net and operand nets.  

\subsection{Electric Power System State Estimation}\label{Sec:WLSEPSSE}
In order to understand how the HFNMCF problem can be generalized to the WLSEHFGSE (in Sec. \ref{Sec:Methdology}), state estimation in electric power systems is taken as a well-known special case example.  The Weighted Least Squares Error Electric Power System State Estimator (WLSEPSSE) estimates the values of (algebraic) states in an electric power system in the presence of minimal measurement error.  More formally, 
\begin{align}\label{WLS_eq1} 
\text{minimize} \hspace{0.8cm} Z= e^TF_{QP}e\hspace{2.5cm}\\\label{WLS_eq2} 
s.t. = \begin{bmatrix}h_1(x_1,x_2,...,x_m)\\h_2(x_1,x_2,...,x_m)\\...\\h_m(x_1,x_2,...,x_m) \end{bmatrix} = \begin{bmatrix}C_{B1}\\C_{B2}\\...\\C_{Bm} \end{bmatrix} + \begin{bmatrix}e_1\\e_2\\...\\e_m \end{bmatrix}
\end{align}
where the objective function $Z$ is a quadratic function of measurement errors $e$, $h(x)$ is a collection of electric power system constraints, and $C_{B}$ is a vector of electric power system measurements\cite{Abur:2004:00}.  Typically, the measurements $C_B$ include active and reactive power injections $(P_i,Q_i)$ into a bus $i$, and active and reactive power flows $(P_{ij},Q_{ij})$ in a power line between buses $i$ and $j$.  Consequently, $h(x)$ consists of the power flow equations that define power injections into buses and flows within power lines\cite{Milano:2010:17}:

\begin{align}\label{Eq:ActivePower}
P_i&=|V_i|\sum_{j=1}^{n}|V_j|(G_{ij}\cos(\theta_i-\theta_j)+B_{ij}\sin(\theta_i-\theta_j)) \\
Q_i&=|V_i|\sum_{j=1}^{n}|V_j|(G_{ij}\sin(\theta_i-\theta_j)-B_{ij}\cos(\theta_i-\theta_j)) \\
P_{ij}&=|V_i||V_j|(G_{ij}\cos(\theta_i-\theta_j)+B_{ij}\sin(\theta_i-\theta_j))-G_{ij}|V_i|^2 \\
Q_{ij}&=|V_i||V_j|(G_{ij}\sin(\theta_i-\theta_j)-B_{ij}\cos(\theta_i-\theta_j))+B_{ij}|V_i|^2
\end{align}
where $G_{ij}$, $B_{ij}$ $\theta_{ij}$ are the conductance, the susceptance and the voltage angle difference between buses $i$ and $j$.  Alternatively, the measurements $C_B$ can include the real and imaginary components of generator currents ${\cal I}_{GR}$, ${\cal I}_{GI}$, the real and imaginary components of demanded currents ${\cal I}_{DR}$, ${\cal I}_{DI}$, the real and imaginary components of line currents ${\cal I}_{LR}$, ${\cal I}_{LI}$, and the real and imaginary components of the generator voltages ${V}_{GR}$, ${V}_{GI}$ and the real and imaginary components of bus voltages ${V}_{DR}$, ${V}_{DI}$.  Consequently, $h(x)$ consists of the Kirchoff's current law and Ohm's law in matrix form\cite{Chen:2004:00}.      
\begin{align}\label{Eq:NetworkFlowIVACOPF1}
\begin{bmatrix}
{\cal I}_{GR} \\
-{\cal I}_{DR} 
\end{bmatrix} - \begin{bmatrix}
A_G^T \\
A_D^T 
\end{bmatrix}{\cal I}_{LR} &= 0 \\\label{Eq:NetworkFlowReactivePower_IVACOPF_Final}
\begin{bmatrix}
{\cal I}_{GI} \\
-{\cal I}_{DI} 
\end{bmatrix} - \begin{bmatrix}
A_G^T \\
A_D^T 
\end{bmatrix}{\cal I}_{LI} &= 0 \\\label{Eq:NetworkFlowIVACOPF3}
\begin{bmatrix}
A_G^T \\
A_D^T 
\end{bmatrix}{\cal I}_{LR} -G\begin{bmatrix}
V_{GR} \\
V_{DR}
\end{bmatrix} + 
B\begin{bmatrix}
V_{GI} \\
V_{DI}
\end{bmatrix} &=0\\\label{Eq:NetworkFlowIVACOPF4}
\begin{bmatrix}
A_G^T \\
A_D^T 
\end{bmatrix}{\cal I}_{LI} - 
B\begin{bmatrix}
V_{GR} \\
V_{DR}
\end{bmatrix} - 
G\begin{bmatrix}
V_{GI} \\
V_{DI}
\end{bmatrix} &=0
\end{align}

\begin{thm}\label{Thm:WLSEEPSSE}
Under an assumption of \emph{zero} measurement error $e=0$, the Weighted Least Square Error Electric Power System State Estimation (WLSEEPSSE) problem is a special case of the Hetero-Functional Network Minimum Cost Flow (HFNMCF) problem.
\end{thm}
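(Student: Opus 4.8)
\textit{Proof proposal.}
The plan is to exhibit an explicit instantiation of the HFNMCF problem of Section~\ref{Sec:HFNMCF} (Eqs.~\eqref{Eq:ObjFunc1}--\eqref{Eq:DeviceModels}) whose feasible set and objective coincide with those of the WLSEEPSSE problem (Eqs.~\eqref{WLS_eq1}--\eqref{WLS_eq2}) once the measurement error is set to $e=0$. The argument proceeds in four stages: collapse the time dimension, reduce the objective, build the HFGT instantiation of a power grid, and match the network constraints block by block.

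First, electric power system state estimation is a static problem, so I would take the simulation horizon $K=1$ (a single representative time step). This makes the engineering-system-net and operand-net state-transition constraints (Eqs.~\eqref{Eq:ESN-STF1}--\eqref{Eq:ESN-STF2}, \eqref{Eq:OperandNet-STF1}--\eqref{Eq:OperandNet-STF2}) reduce to the purely algebraic balance ${M}^+U^+ - {M}^-U^- = 0$ with zero net accumulation in $Q_B$, and renders the transition-duration constraints (Eqs.~\eqref{Eq:DurationConstraint}, \eqref{Eq:OperandNetDurationConstraint}) vacuous. Second, with $e=0$ the WLSEEPSSE objective \eqref{WLS_eq1} is identically zero; this is the special case of the HFNMCF quadratic objective \eqref{Eq:ObjFunc} obtained by setting $F_{QP}=0$ and $f_{QP}=0$. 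Hence the $e=0$ WLSEEPSSE is the feasibility problem ``find $x$ with $h(x)=C_B$,'' and it remains only to realize $h(x)=C_B$ as an HFNMCF constraint system of the form \eqref{Eq:EqualityConstraints}.

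Third, I would instantiate HFGT on the grid as a single-operand system ($|L|=1$, the operand being electric power): the buffers $B_S$ are the buses, partitioned into generator buses and demand buses; the capabilities ${\cal E}_S$ are ``a generator injects power at its bus,'' ``a load withdraws power at its bus,'' and ``a line transports power between its two buses.'' With this choice the matricized hetero-functional incidence matrix $M=M^+-M^-$ encodes exactly the bus-incidence blocks $A_G^T$ and $A_D^T$ appearing in Eqs.~\eqref{Eq:NetworkFlowIVACOPF1}--\eqref{Eq:NetworkFlowIVACOPF4}; the engineering-system-net firing vectors $U^+,U^-$ play the role of the real and imaginary current variables ${\cal I}_{GR},{\cal I}_{GI},{\cal I}_{DR},{\cal I}_{DI},{\cal I}_{LR},{\cal I}_{LI}$; and the voltages $V_{GR},V_{GI},V_{DR},V_{DI}$ are carried as the place markings (the operand state $Q_{SL}$) of the electricity operand net of Definition~\ref{Defn:OperandNet}, with the conductance and susceptance matrices $G,B$ serving as the arc weights $W_{l_i}$ of that operand net. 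Because state estimation is static, the operand-net state is simply held, so Eqs.~\eqref{Eq:OperandNet-STF1}--\eqref{Eq:OperandNet-STF2} again reduce to algebraic relations among $Q_{SL}$ and the firing vectors.

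Fourth, I would match the four network blocks to HFNMCF equality constraints: the real and imaginary Kirchhoff current-law equations \eqref{Eq:NetworkFlowIVACOPF1}--\eqref{Eq:NetworkFlowReactivePower_IVACOPF_Final} are the buffer-balance constraints obtained from \eqref{Eq:ESN-STF1} at steady state; the real and imaginary Ohm's-law equations \eqref{Eq:NetworkFlowIVACOPF3}--\eqref{Eq:NetworkFlowIVACOPF4} are linear relations between the transport-capability firings (line currents) and the operand-net state (bus voltages), hence instances of the operand-net transition and synchronization constraints \eqref{Eq:OperandNet-STF1}, \eqref{Eq:SyncPlus}--\eqref{Eq:SyncMinus}; and the measured quantities collected in $C_B$ are injected as the right-hand sides of the boundary-condition constraints \eqref{CH6:eq:HFGTprog:comp:Bound}--\eqref{Eq:OperandRequirements}, i.e.\ $C_{Up},C_{Un}$ (and their operand-net analogues $F_{Lpi},F_{Lni}$) are populated with $C_B$. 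Concatenating these blocks yields $A_{CP}X=B_{CP}$ with $B_{CP}$ containing $C_B$, with the remaining initial/final-condition constraints \eqref{CH6:eq:HFGTprog:comp:Init}--\eqref{CH6:eq:HFGTprog:comp:Fini} made consistent or vacuous, and with no nonlinear device models $g,h$ needed because the current--voltage formulation \eqref{Eq:NetworkFlowIVACOPF1}--\eqref{Eq:NetworkFlowIVACOPF4} is already linear. I expect the main obstacle to be precisely this last identification: rigorously placing the bus voltages into the operand-net place markings and Ohm's law into the operand-net transition/synchronization constraints, rather than relegating \eqref{Eq:NetworkFlowIVACOPF3}--\eqref{Eq:NetworkFlowIVACOPF4} to the optional device models of \eqref{Eq:DeviceModels}, together with the bookkeeping required by the single-operand reduction $|L|=1$ and the ordering that aligns the $A_G^T/A_D^T$ partition with the matricized incidence matrix, so that the constructed HFNMCF instance is genuinely feasibility-equivalent to the WLSEEPSSE and not merely structurally analogous.
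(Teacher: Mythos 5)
Your overall strategy is the same as the paper's: set $K=1$, reduce the objective to zero under $e=0$, realize Kirchhoff's current law (Eqs.~\ref{Eq:NetworkFlowIVACOPF1}--\ref{Eq:NetworkFlowReactivePower_IVACOPF_Final}) as the steady-state engineering-system-net balance from Eq.~\ref{Eq:ESN-STF1}, let the duration and initial/final-condition constraints collapse to triviality, and inject the measurements $C_B$ through the boundary-condition constraint Eq.~\ref{CH6:eq:HFGTprog:comp:Bound} (which the paper identifies with Eq.~\ref{WLS_eq2}). All of that matches.

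The gap is in the step you yourself flag as the ``main obstacle'': placing the bus voltages $V_{GR},V_{GI},V_{DR},V_{DI}$ into the operand-net place markings $Q_{SL}$ and Ohm's law (Eqs.~\ref{Eq:NetworkFlowIVACOPF3}--\ref{Eq:NetworkFlowIVACOPF4}) into the operand-net transition and synchronization constraints. This cannot work with the constraint blocks as given. The operand-net state-transition constraint Eq.~\ref{Eq:OperandNet-STF1} has the fixed form $-Q_{Sl_i}[k+1]+Q_{Sl_i}[k]+M_{l_i}^+U_{l_i}^+\Delta T-M_{l_i}^-U_{l_i}^-\Delta T=0$: the markings enter only as a temporal difference with identity coefficients, so at steady state the voltage terms vanish entirely, and in no case can they acquire the $-G$ and $+B$ coefficient matrices that Ohm's law requires. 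Likewise the synchronization constraints Eqs.~\ref{Eq:SyncPlus}--\ref{Eq:SyncMinus} couple firing vectors to firing vectors through the binary matrices $\widehat{\Lambda}^{\pm}$ and never touch the place markings. The paper avoids this entirely by setting $Q_S=\emptyset$, $U_L^{\pm}=\emptyset$ (electric power has no operand state), assigning the voltages to the \emph{auxiliary} decision variables $Y=[V_{GR};V_{GI};V_{DR};V_{DI}]$, and realizing Eqs.~\ref{Eq:NetworkFlowIVACOPF3}--\ref{Eq:NetworkFlowIVACOPF4} as the device-model constraint $g(X,Y)=0$ of Eq.~\ref{Eq:DeviceModels} --- precisely the ``relegation'' you chose to avoid. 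Since the device models are an explicit part of the HFNMCF formulation, using them costs nothing in generality, whereas your route leaves Ohm's law with no valid home in the constraint system; you should adopt the $Y$/device-model identification to close the argument.
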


\begin{proof}
Equations \ref{Eq:ObjFunc1}-\ref{Eq:DeviceModels} are taken in turn.  
\begin{itemize}
\item For Equation \ref{Eq:ObjFunc1}, $Z=0=e^TF_{QP}e$.  Consequently, the HFNMCF objective function in Equation \ref{Eq:ObjFunc1} is equated to the WLSEEPSSE objective function in Equation \ref{WLS_eq1}.
\item For Equation \ref{Eq:EqualityConstraints}, the following conditions are imposed to reflect the special case of electric power system operation.  
\begin{itemize}[label={$\circ$}]
\item $K=1$ to reflect a single time step and steady-state conditions.
\item $Q_B[k]=0 \: \forall k$ to reflect that charge does not accumulate at electric power system buffers.
\item $k_{d\psi}=0 \: \forall k,\psi$ to reflect that the flow of electric current is instantaneous.  
\item $Q_{S}=\emptyset, U^-_{L}=\emptyset, U^+_{L}=\emptyset$ to reflect that electric power systems have no operand state.  
\end{itemize}
Consequently, 
\begin{itemize}[label={$\circ$}]
\item Equation \ref{Eq:ESN-STF1} can be equivalently recast by algebra as Equations \ref{Eq:NetworkFlowIVACOPF1} and \ref{Eq:NetworkFlowReactivePower_IVACOPF_Final}. 
\item Equations \ref{Eq:ESN-STF2} and \ref{Eq:DurationConstraint} collapse to triviality.  
\item Equations \ref{Eq:OperandNet-STF1}-\ref{Eq:SyncMinus} are not required.
\item Equation \ref{CH6:eq:HFGTprog:comp:Bound} is equivalent to \ref{WLS_eq2}.
\item Equation \ref{Eq:OperandRequirements} is not required.
\item Equations \ref{CH6:eq:HFGTprog:comp:Init}, and \ref{CH6:eq:HFGTprog:comp:Fini} collapse to triviality.
\end{itemize}
\item For Equation \ref{ch6:eq:QPcanonicalform:3}, the WLSEEPSSE does not impose capacity constraints.  Consequently, Equation \ref{ch6:eq:QPcanonicalform:3} is not required.
\item Finally, for Equation \ref{Eq:DeviceModels}, the auxiliary variables are set to electric power system voltages.  $Y=[V_{GR}; V_{GI}; V_{DR}; V_{DI}]$.  Consequently, Equation \ref{Eq:DeviceModels} is equivalent to Equations \ref{Eq:NetworkFlowIVACOPF3} and \ref{Eq:NetworkFlowIVACOPF4}. 
\end{itemize}
\end{proof}

Theorem \ref{Thm:WLSEEPSSE} provides significant motivational insight.  If the HFNMCF problem is revised to explicitly include measurement errors to exogenous data in Eq. \ref{CH6:eq:HFGTprog:comp:Bound} then it would become a formal generalization of the WLSEEPSSE problem.

\section{Weighted Least Squares Error Hetero-functional Graph State Estimation (WLSEHFGSE)} \label{Sec:Methdology}
Given the discussion of Sec. \ref{Sec:WLSEPSSE}, the WLSEHFGSE is presented as a formal generalization of the HFNMCF problem that admits measurement error of exogenous data in Eqs. \ref{Eq:ESNMeasurement} and \ref{Eq:SSNMeasurement}.  
\begin{align}
\text{minimize } \label{Eq:WLSEHFGSEOF} Z = \sum_{k=1}^{K-1} f_k(x[k],y[k]) & \\ \label{Eq:EqualityConstraintsOC}
\text{s.t. }-Q_{B}[k+1]+Q_{B}[k]+{M}^+U^+[k]\Delta T - {M}^-U^-[k]\Delta T=&0 && \!\!\!\!\!\!\!\!\!\!\!\!\!\!\!\!\!\!\!\!\!\!\!\!\!\!\!\!\!\!\!\!\!\!\!\!\!\!\!\!\! \forall k \in \{1, \dots, K\}\\ \label{Eq:TransitionConstraint2} 
-Q_{\cal E}[k+1]+Q_{\cal E}[k]-U^+[k]\Delta T + U^-[k]\Delta T=&0 && \!\!\!\!\!\!\!\!\!\!\!\!\!\!\!\!\!\!\!\!\!\!\!\!\!\!\!\!\!\!\!\!\!\!\!\!\!\!\!\!\!\forall k \in \{1, \dots, K\}\\ \label{Eq:DurationConstraint2}
 - U_\psi^+[k+k_{d\psi}]+ U_{\psi}^-[k] = &0 && \!\!\!\!\!\!\!\!\!\!\!\!\!\!\!\!\!\!\!\!\!\!\!\!\!\!\!\!\!\!\!\!\!\!\!\!\!\!\!\!\! \forall k \in \{1, \dots, K\} \quad \psi \in \{1, \dots, {\cal E}_S\}\\ \label{Eq:SSN-STF-QB2}
 -Q_{Sl_i}[k+1]+Q_{Sl_i}[k]+{M}_{l_i}^+U_{l_i}^+[k]\Delta T - {M}_{l_i}^-U_{l_i}^-[k]\Delta T=&0 && \!\!\!\!\!\!\!\!\!\!\!\!\!\!\!\!\!\!\!\!\!\!\!\!\!\!\!\!\!\!\!\!\!\!\!\!\!\!\!\!\!\forall k \in \{1, \dots, K\} \quad i \in \{1, \dots, |L|\}\\ \label{Eq:OperandNet-STF2OC}
-Q_{{\cal E}l_i}[k+1]+Q_{{\cal E}l_i}[k]-U_{l_i}^+[k]\Delta T + U_{l_i}^-[k]\Delta T=&0 && \!\!\!\!\!\!\!\!\!\!\!\!\!\!\!\!\!\!\!\!\!\!\!\!\!\!\!\!\!\!\!\!\!\!\!\!\!\!\!\!\!\forall k \in \{1, \dots, K\} \quad i \in \{1, \dots, |L|\}\\ 
- U_{xl_i}^+[k+k_{dxl_i}]+ U_{xl_i}^-[k] = &0 &&  \!\!\!\!\!\!\!\!\!\!\!\!\!\!\!\!\!\!\!\!\!\!\!\!\!\!\!\!\!\!\!\!\!\!\!\!\!\!\!\!\! \forall k\in \{1, \dots, K\}, \: \forall x\in \{1, \dots, |{\cal E}_{l_i}\}|, \: l_i \in \{1, \dots, |L|\}\\ 
U^+_L[k] - \widehat{\Lambda}^+ U^+[k] =&0 && \!\!\!\!\!\!\!\!\!\!\!\!\!\!\!\!\!\!\!\!\!\!\!\!\!\!\!\!\!\!\!\!\!\!\!\!\!\!\!\!\!\forall k \in \{1, \dots, K\}\\ \label{Eq:SyncMinusOC}
U^-_L[k] - \widehat{\Lambda}^- U^-[k] =&0 && \!\!\!\!\!\!\!\!\!\!\!\!\!\!\!\!\!\!\!\!\!\!\!\!\!\!\!\!\!\!\!\!\!\!\!\!\!\!\!\!\!\forall k \in \{1, \dots, K\}\\ \label{Eq:ESNMeasurement}
\begin{bmatrix}
D_{Up} & \mathbf{0} \\ \mathbf{0} & D_{Un}
\end{bmatrix} \begin{bmatrix}
U^+ \\ U^-
\end{bmatrix} =& \begin{bmatrix}
C_{Up} + {\cal E}_{Up}\\ C_{Un} + {\cal E}_{Un}
\end{bmatrix} &&  \\\label{Eq:SSNMeasurement}
\begin{bmatrix}
E_{Lp} & \mathbf{0} \\ \mathbf{0} & E_{Ln}
\end{bmatrix} \begin{bmatrix}
U^+_{l_i} \\ U^-_{l_i}
\end{bmatrix} =& \begin{bmatrix}
F_{Lpi} + {\cal E}_{Lpi} \\ F_{Lni} +  {\cal E}_{Lni}
\end{bmatrix} && \\\label{CH6:eq:HFGTprog:comp:InitOC} 
\begin{bmatrix} Q_B ; Q_{\cal E} ; Q_{SL} \end{bmatrix}[1] =& \begin{bmatrix} C_{B1} ; C_{{\cal E}1} ; C_{{SL}1} \end{bmatrix} \\ \label{CH6:eq:HFGTprog:comp:FiniOC}
\begin{bmatrix} Q_B ; Q_{\cal E} ; Q_{SL} ; U^- ; U_L^- \end{bmatrix}[K+1] =   &\begin{bmatrix} C_{BK} ; C_{{\cal E}K} ; C_{{SL}K} ; \mathbf{0} ; \mathbf{0} \end{bmatrix} \\
D_{CP}X \leq& E_{CP} \\ 
g(X,Y)=&0
\end{align}
where the vector of primary decision variables $X=\begin{bmatrix}x[1]; \ldots; x[K]\end{bmatrix}$ has been expanded to include the measurement errors of exogenous data in ${\cal E}_{Up}, {\cal E}_{Un}, {\cal E}_{Lp}, {\cal E}_{Ln}$.  
\begin{equation}
x[k] = \begin{bmatrix} Q_B ; Q_{\cal E} ; Q_{SL} ; Q_{{\cal E}L} ; U^- ; U^+ ; U^-_L ; U^+_L; {\cal E}_{Up}; {\cal E}_{Un}; {\cal E}_{Lp}; {\cal E}_{Ln}
\end{bmatrix}[k] \quad \forall k \in \{1, \dots, K\}
\end{equation}
Note that Equations \ref{Eq:ESNMeasurement} and \ref{Eq:SSNMeasurement} generalize Equations \ref{CH6:eq:HFGTprog:comp:Bound} and \ref{Eq:OperandRequirements} with the introduction of the measurement error variables ${\cal E}_{Up}, {\cal E}_{Un}, {\cal E}_{Lp}, {\cal E}_{Ln}$.  Furthermore, Equations \ref{Eq:ESNMeasurement} and \ref{Eq:SSNMeasurement} introduce the concatenated vectors 
$U^+=\begin{bmatrix}U^+[1]; \ldots ; U^+[K]\end{bmatrix}$, $U^-=\begin{bmatrix}U^-[1]; \ldots ; U^-[K]\end{bmatrix}$, $U_{l_i}^+=\begin{bmatrix}U_{l_i}^+[1]; \ldots ; U_{l_i}^+[K]\end{bmatrix}$, and
$U_{l_i}^-=\begin{bmatrix}U_{l_i}^-[1]; \ldots ; U_{l_i}^-[K]\end{bmatrix}$.  This allows these equations to introduce exogenous data that spans multiple time steps.  

\section{Application of WLSEHFGSE to the AMES}\label{Sec:WLSEHFGSE-AMES}
Given the exposition of the WLSEHFGSE problem in the previous section, this section turns to applying it to the AMES.  Sec. \ref{Sec:AMESCharacteristics} describes how the generic WLSEHFGSE collapses into an AMES-specific weighted least squares error state estimation optimization problem.  The remainder of this section elaborates on how each of the equations in this optimization problem is practically implemented.  

\subsection{Physical Characteristics of the AMES}\label{Sec:AMESCharacteristics}
It is important to recognize that the WLSEHFGSE is generic to a wide variety of complex systems-of-systems (much like the HFNMCF problem).  In contrast, specific systems-of-systems demonstrate intrinsic physical characteristics that collapse this general form to their specific application.  Such characteristics limit the feasible space of decision variables $x[k] = \begin{bmatrix} Q_B ; Q_{\cal E} ; Q_{SL} ; Q_{{\cal E}L} ; U^- ; U^+ ; U^-_L ; U^+_L; {\cal E}_{Up}; {\cal E}_{Un}; {\cal E}_{Lp}; {\cal E}_{Ln} \end{bmatrix}[k] \quad \forall k \in \{1, \dots, K\}$.   In the context of the AMES, these physical characteristics appear as the following modeling assumptions:  

\begin{itemize}
\item $Q_B[k+1]-Q_B[k]=0 \: \forall k \in \{1, \dots, K\}$:  Because the AMES reference architecture\cite{Thompson:2023:00} includes storage capabilities, the quantity of stored commodities can be tracked within $U^+$ and $U^-$.  The quantity of commodities in $Q_B[k]$ is therefore constant in time and can be removed from the optimization problem.
\item $k_{d\psi}=0 \: \forall k \in \{1, \dots, K\}, \psi \in \{1, \dots, {\cal E}_S\}$:  The AMES is assumed to demonstrate a continuous flow of energy commodities and therefore all transitions have zero duration.  Consequently,
\begin{itemize}[label=$\diamond$]
\item $U^+[k]=U^-[k] \: \forall k \in \{1, \dots, K\}$. Furthermore, the simplifying notation $U[k]$ is introduced and Eq. \ref{Eq:DurationConstraint2} is eliminated from the problem entirely.  Similarly, the simplifying notation $D_U$, $C_U$, and ${\cal E}_U$ is introduced and Eq. \ref{Eq:ESNMeasurement} is collapsed into a single block-row matrix equation.       
\item $Q_{\cal E}[k+1]-Q_{\cal E}[k]=0 \: \forall k \in \{1, \dots, K\}$:  The quantity of commodities in $Q_{\cal E}[k]$ is constant in time.  Both $Q_{\cal E}[k]$ and Eq. \ref{Eq:TransitionConstraint2} can be removed from the optimization problem.  
\end{itemize} 
\item $S_{l_i}=\emptyset \: \forall l_i \in L$:  Although the AMES reference architecture transforms commodities from one to another, none of the capabilities change the \emph{state} of these commodities.  Consequently, none of the energy commodities requires a state (to be tracked).  Furthermore, ${\cal N}_{l_i}=\emptyset \: \forall l_i \in L$, and Eq. \ref{Eq:SSN-STF-QB2}-\ref{Eq:SyncMinusOC}, \ref{Eq:SSNMeasurement} and \ref{CH6:eq:HFGTprog:comp:InitOC} can be removed from the optimization problem.
\item $g(X,Y)=0$.  As stated previously, the device model functions are optional and are not included in this application.  
\item Finally, the objective function depends on the square of the measurement error and the firing vectors.  $Z_{WLSE} = \sum_{k=1}^{K-1} x^T[k] F_{QP} x[k]$.  
\end{itemize}

As a result, the WLSEHFGSE, when applied to the AMES, appears as the following optimization problem:  
\begin{align}
\text{minimize } \label{Eq:WLSEOF} Z_{WLSE} = \sum_{k=1}^{K-1} x^T[k] F_{QP} x[k] & \\ \label{Eq:EqualityConstraintsOC2}
\text{s.t. }\bigg(M^+-{M}^-\bigg)U[k]\Delta T =&0 && \!\!\!\!\!\!\!\!\!\!\!\!\!\!\!\!\!\!\!\!\!\!\!\!\!\!\!\!\!\!\!\!\!\!\!\!\!\!\!\!\! \forall k \in \{1, \dots, K\}\\ \label{Eq:ESNMeasurement3}
D_{U}U =& C_{U} + {\cal E}_{U} && \\
U[K+1] =   &\mathbf{0} \\\label{Eq:ESNCapacity}
D_{CP}X \leq& E_{CP}
\end{align}
Furthermore, the decision variables become $x[k] = \begin{bmatrix} U ; {\cal E}_{U} \end{bmatrix}[k] \quad \forall k \in \{1, \dots, K\}$ to account for the flows through the engineering system net transitions $U$ and the measurement error ${\cal E}_{U}$ associated with the exogenous data $C_U$.  

\subsection{Weighted Least Squares Error Objective Function}
As stated previously, the weighted least square error objective function in Eq. \ref{Eq:WLSEOF} is a special case of Eq. \ref{Eq:WLSEHFGSEOF}.  It requires a positive semi-definite, diagonal, quadratic coefficient matrix $F_{QP}$.  $F_{QP}$ is composed of two diagonal matrices pertaining to the $U$ and ${\cal E}_U$ decision vectors respectively.    
\begin{align}\label{Eq:ObjWeight}
F_{QP} = \begin{bmatrix}
F_{U} & 0 \\
0 & F_{\cal E}
\end{bmatrix}
\end{align}
While it is possible to choose equal error weights ($F_{\cal E}=I$), such a choice does not take into account that the measured data $C_U$ can have very different scale and/or measurement units.  Consequently, a relatively small measurement error associated with a large measured value can have a disproportionate impact on the optimization.  Instead, $F_{\cal E}$ is chosen to be inversely proportional to the square of the maximum of each data time series.  
\begin{align}\label{Eq:ObjWeight2}
F_{\cal E} = diag\left(\frac{1}{max(C_U^2,2)}\right)
\end{align}
Similarly, it is possible to choose no firing vector weights ($F_{U}=0$).  However, such a choice does not take into consideration a situation where the WLSEHFGSE is provided with a data time series whose time step is coarse (e.g., annual) relative to the chosen time step (e.g., monthly).  In such a case, the WLSHFGSE will not distinguish between solutions that fulfill Eq. \ref{Eq:ESNMeasurement3} through non-zero values in a single time step and solutions that evenly distribute the non-zero values over time.  For example, the data time series associated with coal withdrawal and processed oil withdrawal are provided annually.  If $F_{U}=0$, then solutions that meet the annual coal and oil demand in \emph{a single month} are equally preferable to those that do so evenly over all twelve months.  To overcome this undesirable result, $F_{U}=0$ is assigned a small quadratic cost, which has the numerical effect of temporally distributing the solution to meet the coarse data time series.  
\begin{align}\label{Eq:ObjWeight1}
F_{U} = \alpha I
\end{align}
where $\alpha = 0.01\cdot \min(\max(C_U^2),2),1)$, $\max(\cdot,2)$ is a maximum function of the second (i.e. temporal) dimension and $\min(\cdot,1)$ is a minimum function of the first (i.e. feature) dimension.  The value of $0.01$ is chosen to ensure that the firing vector weights are approximately two orders of magnitude smaller than the data error weights.

\subsection{AMES Engineering System Net State Transition Function}
As stated previously, Eq. \ref{Eq:EqualityConstraintsOC2} is a special case of the engineering system net state transition function in Eq. \ref{Eq:EqualityConstraintsOC}.  It requires the calculation of the hetero-functional incidence matrix $M=M^+-M^-$ (Defn. \ref{Defn:ESN}).  As described previously in Sec. \ref{Introduction} and \ref{BackgroundCh7}, previous works have calculated the \emph{unweighted} engineering system net incidence matrix for the AMES\cite{Farid:2022:ISC-J49} using the HFGT toolbox\cite{Thompson:2023:07}.  The same toolbox may be used to calculate the \emph{weighted} engineering system net incidence matrix; but requires data on the process weights associated with each capability in the AMES' reference architecture\cite{Thompson:2023:00}.  A survey of the processes in the activity diagrams of the AMES' reference architecture reveals three types of processes.  
\begin{itemize}
\item \textbf{Injection/Withdrawal Processes:}   Some processes in the AMES reference architecture cross the system boundary.  They either describe a simple injection of operands into the system or a withdrawal of operands from the system.  In such cases, the corresponding value in the engineering system net incidence matrix is +1 or -1, respectively (and no additional data is needed).  
\item \textbf{Lossless Transportation Processes:} Other processes in the AMES reference architecture describe a simple (lossless) transportation of operands.  In such cases, there is both a +1 and -1 in the corresponding elements of the engineering system net incidence matrix (and no additional data is needed).  
\item \textbf{Transformation Processes:}  The remaining processes in the AMES reference architecture describe transformation processes where one operand is converted into another in some fixed ratio.  These transformation process weights are determined from EIA data sources \cite{EIA:2023:05,EIA:2023:06} and are summarized in Table \ref{weight-Table}.  More specifically, electric power generation processes are defined by average reported heat rates \cite{EIA:2023:05}.  Similarly, the weight assigned to the processing of crude oil to processed oil was identified by calculating the energy content of the average products from processing a single barrel of crude oil \cite{EIA:2023:06}.  
\end{itemize}

\begin{table}[h!]
\vspace{-0.2in}
\begin{center}
\caption{\label{weight-Table}Values of Processes Weights Used in the Hetero-functional Incidence Matrix}
\vspace{-0.1in}
\begin{tabular}{cccccc} 
\toprule
\textbf{Transformation Process} & \textbf{Input Operand} & \textbf{Input Weight(GJ)} & \textbf{Output Operand} & \textbf{Output Weight(GJ)} & \textbf{Source}\\
\hline
Generate Electric Power\newline from Processed Oil & Processed Oil & 3.289 & Electric Power & 1 & \cite{EIA:2023:05}\\ 
\hline
Generate Electric Power\newline from Processed Gas & Processed Gas & 2.253 & Electric Power & 1 & \cite{EIA:2023:05}\\ 
\hline
Generate Electric Power\newline from Syngas & Syngas & 2.253 & Electric Power & 1 & \cite{EIA:2023:05}\\ 
\hline
Generate Electric Power\newline from Coal & Coal & 3.102 & Electric Power & 1 & \cite{EIA:2023:05}\\ 
\hline
Generate Electric Power\newline from Nuclear & Nuclear & 3.056 & Electric Power & 1 & \cite{EIA:2023:05}\\ 
\hline
Process Crude Oil & Crude Oil & 1.285 & Processed Oil & 1 & \cite{EIA:2023:06}\\
\bottomrule
\end{tabular}
\end{center}
\vspace{-0.35in}
\end{table}

\subsection{AMES Engineering System Net Measurement Function}
As stated previously, Eq. \ref{Eq:ESNMeasurement3} is a special case of the engineering system net measurement function in Eq. \ref{Eq:ESNMeasurement}.  It requires the acquisition of the exogenous time-dependent behavioral data $C_U$.  Importantly, there is no limitation on the amount of data that can be included in $C_U$.  The data used in this case study is drawn from multiple EIA sources and is summarized in Table \ref{Constraint-Table}.  It describes each data set integrated into the optimization program, listing each of the following properties: Energy Modes, Data Set Description, Units, Temporal Granularity, Date Range, Spatial Granularity,  Corresponding Processes, Corresponding Resources, and EIA Source.  All five of the used datasets measure flows of energy commodities across the AMES' system boundary.  Four datasets describe energy withdrawals (or demands), while the fifth describes energy injections (of supply).
\begin{itemize}
\item The first integrated data set provides daily electric demand values for each balancing authority.  These values are summed to form monthly electric demand values for each region, which are then integrated into the WLSEHFGSE problem.  This dataset creates a constraint on the routed electric power that flows through the electric power transmission infrastructure, subsequently forcing electric power generation.  
\item The second integrated dataset imposes constraints on the amount of power each type of electric generation process produces.  This dataset provides target values for the amount of power each state generates for each fuel source.  
\item The third integrated dataset imposes a constraint on the demand for natural gas. This dataset provides target values for the monthly amount of natural gas withdrawn for each state.
\item The fourth integrated dataset imposes a constraint on the demand for processed oil. This dataset provides target values for the annual amount of processed oil withdrawn for each state.
\item The fifth integrated dataset imposes a constraint on the demand for withdrawing coal. This dataset provides target values for the annual amount of coal withdrawn for each state.
\end{itemize}
This demonstration of the WLSEHFGSE prioritizes data tied to the AMES' energy demand.  To satisfy the imposed demands, energy operand flows are routed through the AMES' capabilities to minimize the weighted least square measurement errors associated with the data in $C_U$.  As each element of data in $C_U$ is acquired, a corresponding element of measurement error in ${\cal E}_U$ is introduced.  Naturally, this case study can always be extended with additional, perhaps even redundant, datasets either from the EIA or other sources.   

\begin{table}[h!]
\vspace{-0.2in}
\begin{center}
\caption{\label{Constraint-Table}Exogeneous Time-Dependent Behavioral Data Sets Used in the AMES WLSEHFGSE Model.}\vspace{-0.10in}
\vspace{-0.1in}
\begin{tabular}{p{0.40in}p{1.1in}p{0.625in}p{0.50in}p{0.85in}p{0.9in}p{1.20in}p{0.25in}}\toprule
\textbf{Energy Modes} & \textbf{Data Set Description} & \textbf{Temporal Granularity} & \textbf{Date Range} & \textbf{Spatial \newline Granularity} &\textbf{Corresponding Processes} & \textbf{Corresponding Resources} & \textbf{EIA Source} \\\hline
Electric Power & Electric Power \newline Demand (MWH) & Daily & 2020.01.01 \newline 2021.12.31& Balancing \newline Authority Region & Route Electric \newline Power & Substations & \cite{EIA:2023:00}\\\hline
Electric Power & Electric Power Generation (1000 MWH) & Monthly & 2020.01.01 \newline 2021.12.31 &  States & Generate Electric Power from \{fuel source\}& Electric Power Plants & \cite{EIA:2023:01}\\\hline
Natural Gas & Natural Gas \newline Withdrawal (MMCF) & Monthly & 2020.01.01\newline 2021.12.31 &  States & Withdrawal \newline Natural Gas& Natural Gas Terminals \newline Natural Gas Receipt Delivery\newline Natural Gas Ports & \cite{EIA:2023:02}\\\hline
Processed Oil & Processed Oil \newline Withdrawal (BBTU) & Annual & 2020.01.01\newline 2021.12.31 &  States & Withdrawal \newline Processed Oil& Oil Terminals\newline Oil Receipt Delivery \newline Oil Ports & \cite{EIA:2023:03}\\ 
\hline
Coal & Coal \newline Withdrawal (BBTU) & Annual & 2020.01.01\newline 2021.12.31 & States & Withdrawal \newline Coal & Coal Terminals \newline Coal Ports & \cite{EIA:2023:04}\\\bottomrule
\end{tabular}
\end{center}
\vspace{-0.2in}
\end{table}

It is essential to acknowledge that the data presented in Table \ref{Constraint-Table} and incorporated into $C_U$ may have a spatial granularity different from the asset-level granularity defined in the engineering system's net transition flows $U$.  Similarly, this data may have a different temporal granularity than the engineering system net time step $\Delta T$.  In order to accommodate these disparate spatial and temporal granularities, the engineering system net transition flows $U$ may be reorganized into a two-dimensional matrix $\widetilde{U}$ of size $|{\cal E}_S| \times K$ such that $U=vec(\widetilde{U})$.  Similarly, the data contained in $C_U$ may be reorganized into a two-dimensional matrix $\widetilde{C}_U$ such that $C_U=vec(\widetilde{C}_U)$.   $\widetilde{C}_U$ is assumed to have $K_D$ columns corresponding to the number of time steps in the data and $|{\cal E}_D|$ rows corresponding to the number of capabilities (i.e. features) in the data.     

Once the exogenous time-dependent behavioral data $C_U$ is acquired, the coefficient matrix $D_U$ must be calculated.  Several situations are possible:
\begin{itemize}
\item Redundant Data with Relatively Fine Spatiotemporal Resolution:  In this case, $K_D>K$ or $|{\cal E}_D|>|{\cal E}_S|$ and the exogenous time-dependent behavior data $C_U$ must be downsampled to the same resolution as $U$ and the calculation of $D_U$ follows from one of the cases below.  
\item Complete Data with Same Spatio-temporal Resolution:  In this case, $K_D=K$ and $|{\cal E}_D|=|{\cal E}_S|$ and $D_U=I$.  
\item Redundant Data with Same Spatiotemporal Resolution:  In this case, $K_D=K$ and $|{\cal E}_D|>|{\cal E}_S|$.  $D_U$ has more rows than $U$ and is composed of the vertical concatenation of elementary basis row vectors with filled (one) elements that correspond to the associated element in $U$.  
\item Incomplete Data with Same Spatiotemporal Resolution:  In this case, $K_D=K$ and $|{\cal E}_D|<|{\cal E}_S|$.  $D_U$ has fewer rows than $U$ and is similarly composed of the vertical concatenation of elementary basis row vectors.
\item Incomplete Data with Relatively Coarse Spatiotemporal Resolution:  In this case, $D_U$ must be calculated from a capability aggregation matrix $D_{\cal E}$ and a temporal aggregation matrix $D_T$ as defined below.  
\end{itemize}

\begin{defn}[Capability Aggregation Matrix $D_{\cal E}$] The capability aggregation matrix $D_{\cal E} \in \{0,1\}^{|{\cal E}_S|\times |{\cal E}_D|}$ is a matrix whose element $D_{\cal E}(\psi_1,\psi_2)=1$ when the flow associated with system capability ${\epsilon}_{\psi_2} \in {\cal E}_S$ is part of the flow associated with data element $\widetilde{C}_U(\psi_1,k)$ at time k. 
\end{defn} 

\begin{defn}[Temporal Aggregation Matrix $D_{T}$] The temporal aggregation matrix $D_{T} \in \{0,1\}^{K\times K_D}$ is a matrix whose element $D_{T}(k_1,k_2)=1$ when the flows associated with time step $k_1 \in \{1, \ldots, K\}$ corresponds to the flows associated with $k_2 \in \{1, \ldots, K_D\}$ in the data $\widetilde{C}_U$. 
\end{defn} 
\noindent Consequently, the case of incomplete data with relatively coarse spatiotemporal resolution requires an Engineering System Net Measurement function with matricized decision variables:  
\begin{align}\label{Eq:ESNMeasurementFunc2}
D_{\cal E}\widetilde{U}D_{\cal T} = 
\widetilde{C}_{U} + \widetilde{\cal E}_{U}
\end{align}
Using the mixed product property\cite{Schacke:2004:00}, Eq. \ref{Eq:ESNMeasurementFunc2} simplifies straightforwardly to
\begin{align}\label{Eq:ESNMeasurementFunc3}
\begin{bmatrix}D_{\cal T}^T \otimes D_{\cal E}\end{bmatrix} U
= 
{C}_{U} + {\cal E}_{U}
\end{align}
where $\otimes$ is the Kronecker product.  Eq. \ref{Eq:ESNMeasurementFunc3} is equivalent to Eq. \ref{Eq:ESNMeasurement3} when $D_U=D_{\cal T}^T \otimes D_{\cal E}$.  In such a way, the WLSEHFGSE incorporates exogenous time-dependent behavioral data with minimal measurement error.   

\subsection{Capacity Constraints}

When applying the WLSEHFGSE problem to the AMES, not only must one consider constraints specific to the input data measurements, but also those that stem from the instantiated physical architecture.  This work demonstrates these constraints through the use of capacity constraints on the electric power generation processes.  Each instantiated electric power plant has a maximum electric power generation capacity that was Incorporated in the Platts Map Data Pro \cite{Platts:2017:00} datasets.  These capacities were used as the Platts datasets were used to guide the instantiation of the test cases \cite{Platts:2017:00, Thompson:2022:00}. These capacity constraints limit the amount of electric power that can be generated in a single time step by each capability associated with electric power generation by imposing a hard cap on the maximum amount.  This was imposed using the following equation

\begin{align}\label{Eq:CapConstraint}
\begin{bmatrix}
 A_{eg}
\end{bmatrix}
\begin{bmatrix}
 U[k]
\end{bmatrix} <=
\begin{bmatrix}
 C_{cap}
\end{bmatrix} && \forall k \in \{1, \dots, K\}
\end{align}

\noindent where $A_{eg}$ is a binary selector vector of size $|{\cal E}_S|\times 1$ with a one in every index corresponding to an electric power generation transition and a zero fro every other transition.  The vector $C_{cap}$ is also of size $|{\cal E}_S|\times 1$ with a zero in every transition index that does not correspond to electric power generation.  In the elements that do correspond to electric power generation, the transitions associated with the maximum capacity are stored.  In this way, no electric generation can exceed its maximum capacity at any time step.

\section{Visualized Results}\label{Sec:Results}
Given the exposition of the WLSEHFGSE problem in Sec. \ref{Sec:Methdology} and its application to the AMES in Sec. \ref{Sec:WLSEHFGSE-AMES}, this section provides the first integrated results of the AMES' open-source, asset-level granularity, behavioral model.  For comparison, the AMES is divided into two regions:  (1) Western Region from the Rocky Mountain states to the Pacific Ocean states and (2) Eastern Region from the Mississippi/Ohio River Valley states to the Atlantic Ocean states.  These are depicted in Fig. \ref{Fig:AMESRegions}.  Each region is composed of multiple regulatory regions (e.g., Independent System Operators, Balancing Authorities, etc) across its four constituent energy sectors.  

\begin{figure}[h!]\begin{center}\includegraphics[width=0.90\textwidth]{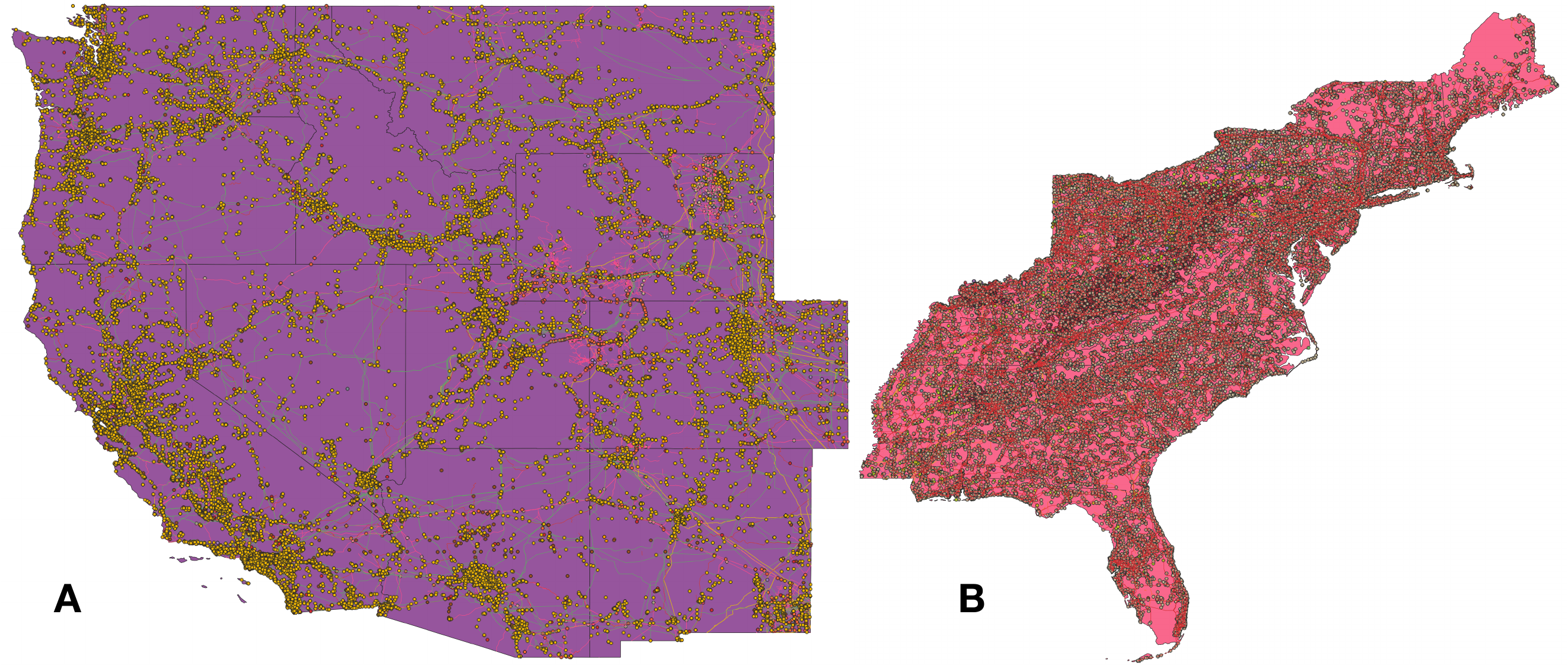}
\vspace{-0.1in}\caption{GIS Layers used the WLSEHFGSE applied to the AMES.  Subfigure (A) includes the Western Region from the Rocky Mountain states to the Pacific Ocean states.  Subfigure (b) includes the Eastern region from Mississippi/Ohio River Valley states to the Atlantic Ocean states.}\label{Fig:AMESRegions}\end{center}\end{figure}

The AMES's large size, asset-level granularity, and multi-faceted behavior necessitate a thoughtful approach to its visual presentation.  Sec. \ref{Sec:Sankey} transforms the activity diagrams of the AMES reference architecture \cite{Thompson:2023:00} into quantified Sankey diagrams.  These Sankey diagrams show how energy flows from one process to another.  Sec. \ref{Sec:ChoroplethDemand} uses choropleth (heat) maps to provide a spatial description of energy demand.  Sec. \ref{Sec:InterstateFlows} shows interstate network energy flows.  
Finally, Section \ref{Sec:ErrorAnalysis} presents a root-cause analysis of the measurement errors calculated by the WLSEHFGSE problem.  It uses bar graphs to show the relative size of measurement errors in the EIA data.  

\subsection{Sankey Diagram of Matter and Energy Flows}\label{Sec:Sankey}
The steady-state matter and energy conversion behavior of the AMES's Western Interconnect and Eastern Region is described with the animated Sankey diagrams shown in Figs.  \ref{Fig:SankeyWestern} and \ref{Fig:SankeyEastern} respectively.  Although the WLSEHFGSE problem is able to resolve decision variables associated with individual capabilities pertaining to the AMES' energy assets, the Sankey diagram aggregates decision variables with the same type of energy conversion or transportation process.  Consequently, the Sankey diagram does not provide insight into the AMES' spatial resolution, even when the WLSEHFGSE problem provides it.  Instead, the Sankey diagrams highlight the relative strength of process flows in the AMES.   Furthermore, the slider at the bottom of the plots operates over the duration of the date range specified in Table \ref{Constraint-Table}:  January 2020 - December 2021 with monthly time resolution.  Therefore, this analysis demonstrates how the relative strength of the process flows varies with the changing seasons.  Despite using this animated approach, Sankey diagrams still do not visualize capabilities that store matter and energy.  Ultimately, Figs.  \ref{Fig:SankeyWestern} and \ref{Fig:SankeyEastern} each show how the Sankey diagram behavior differs between a winter month (i.e. January 2020) and a summer month (i.e. July 2020).  

\begin{figure}[h!]\begin{center}\includegraphics[width=0.85\textwidth]{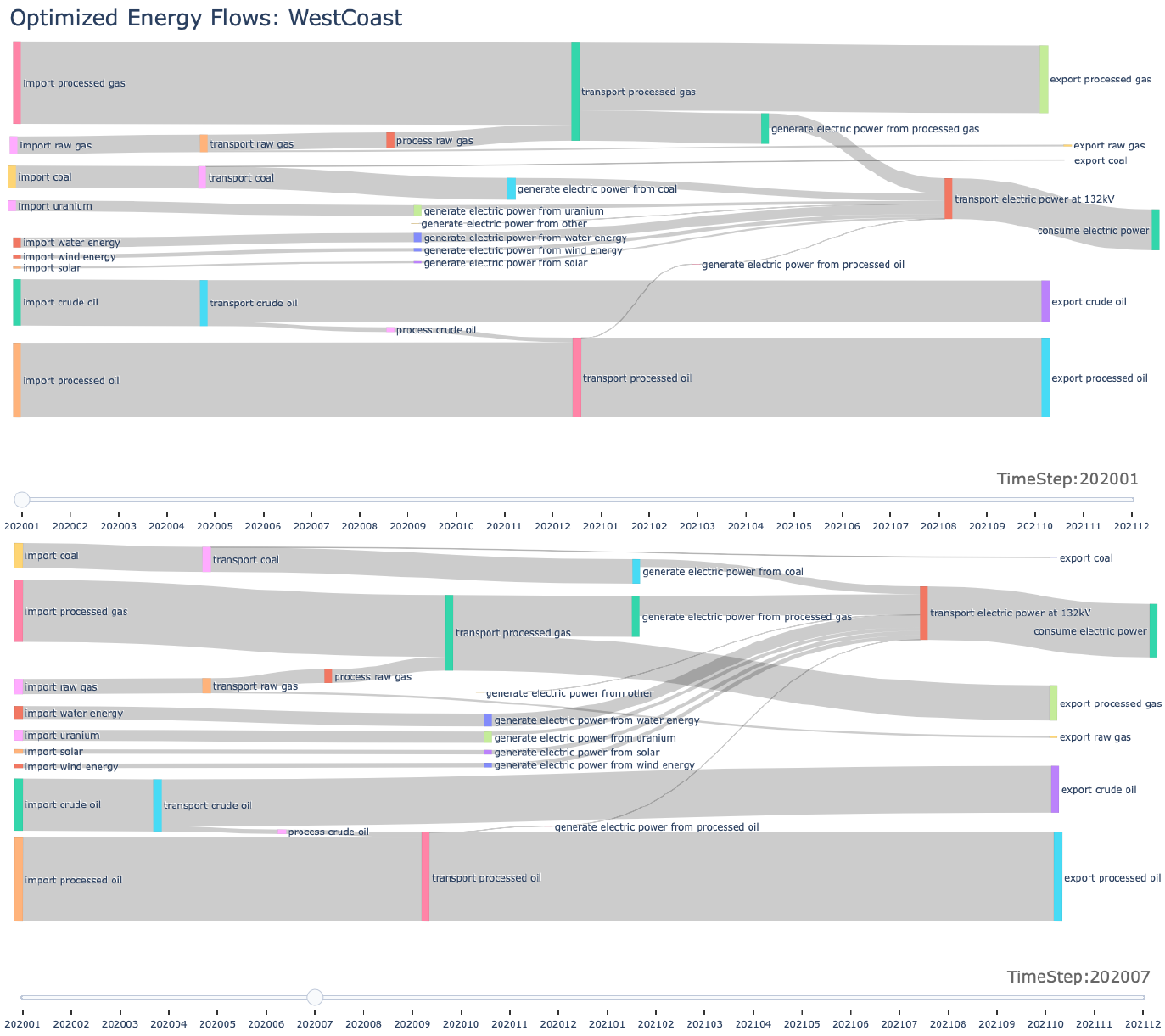}
\vspace{-0.1in}\caption{The Western Region Sankey Diagram shows the flows of mass/energy as they are transformed and transported from one process to another in (a.) January 2020 and (b.) July 2020.  A slider-bar has been included to animate the Sankey Diagram with a monthly time step.}\label{Fig:SankeyWestern}\end{center}\end{figure}

\begin{figure}[h!]\begin{center}\includegraphics[width=0.825\textwidth]{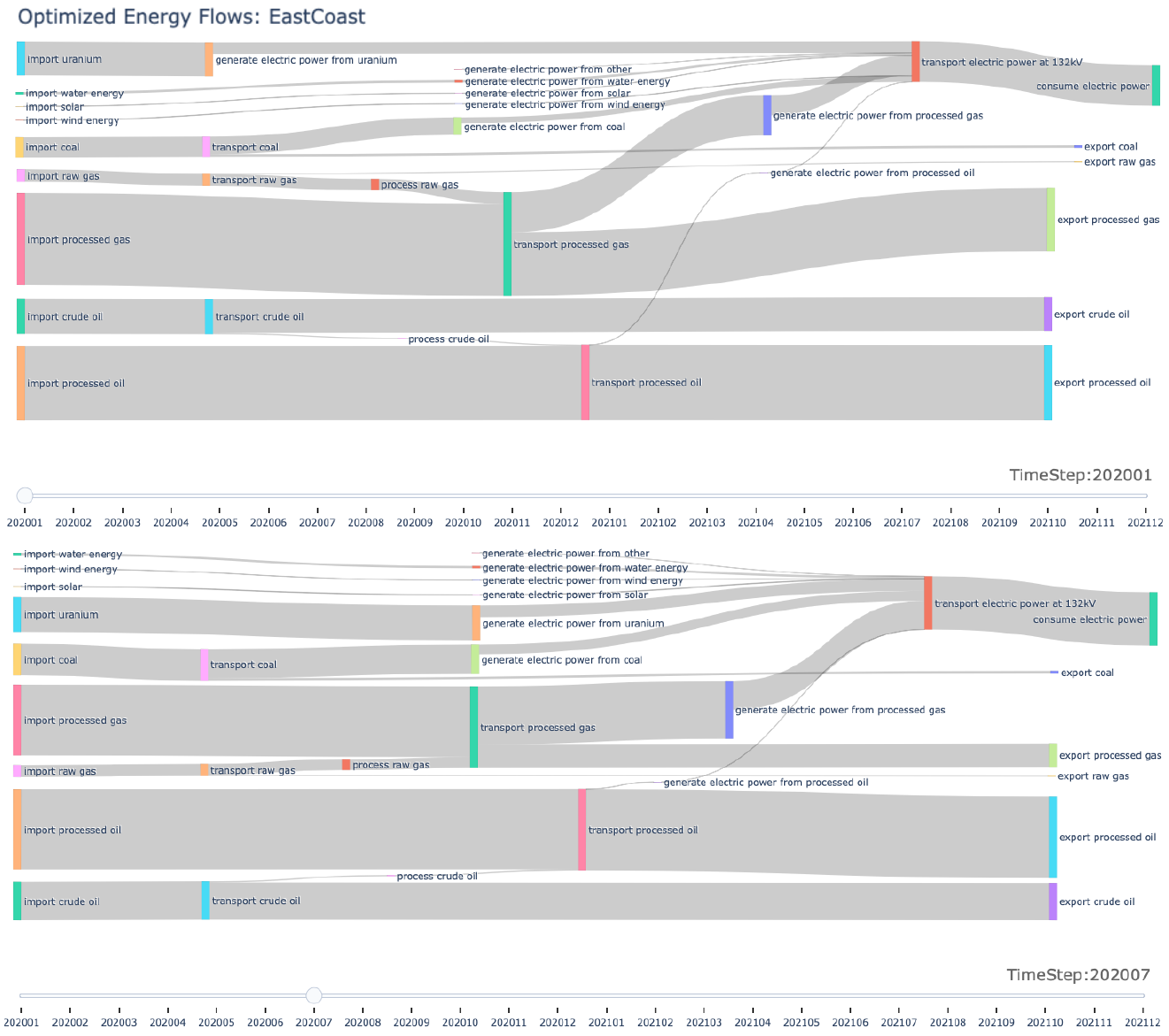}
\vspace{-0.1in}\caption{The Eastern Region Sankey Diagram shows the flows of mass/energy as they transformed and transported from one process to another.  A slider-bar has been included to animate the Sankey Diagram with a monthly time step.}\label{Fig:SankeyEastern}\end{center}\end{figure}

In all, Figs.  \ref{Fig:SankeyWestern} and \ref{Fig:SankeyEastern} illustrate the relative importance of the process flows for all four energy infrastructures in the AMES (i.e., coal, oil, natural gas, and electricity) as described in the reference architecture presented in the introduction.   As expected, oil and natural gas play a prominent role in both the Western Interconnect and the Eastern Region.  Meanwhile, the Western Interconnect relies heavily on processed (natural) gas, water energy (i.e., hydropower), and coal for electric power generation.  Similarly, the Eastern region relies heavily on processed (natural) gas, uranium (i.e., nuclear), and coal for electric power generation.  These results are consistent with previous works that investigated the structural properties of the AMES in differing regions\cite{Thompson:2022:00, Thompson:2022:ISC-C78}.  Much like the structural analysis, the behavioral analysis can also be used to investigate the impacts of geographical resources and political policies.  For example, it is worth noting that the Western interconnect has a greater reliance on solar electric power generation than the Eastern region, as their geographical landscape and policies favor solar as a renewable energy source.  Finally, it is worth noting that the use of processed gas increases during the winter and decreases during the warmer summer months in both regions.  This can be attributed to the use of processed gas for heating during the cold winters and the lack of heating demand during the summer.  Alternatively, electricity demand increases in the summer compared to winter, as the demand for air conditioning rises during the summer.  

\subsection{Choropleth Map of Spatial-Resolved Energy Demand}\label{Sec:ChoroplethDemand}
In contrast to the Sankey diagrams presented in the previous section, a spatial description of the relative energy demand intensities in the AMES' Western Interconnect and the Eastern Region is provided through the Choropleth Maps shown in Figs. \ref{WesternChoropleth} and \ref{EasternChoropleth}, respectively.  It is important to recognize that Figs. \ref{WesternChoropleth} and \ref{EasternChoropleth} plot values of the decision variables in the WLSEHFGSE problem rather than the exogenous input data found in $C_U$.  Naturally, these two quantities are only the same in the rare situation where measurement error is entirely absent.  Although the WLSEHFGSE problem can resolve the decision variables associated with individual capabilities pertaining to the AMES' energy assets, capability aggregation matrices have been used to depict the energy behavior at the chosen state level of granularity.  Additionally, slider bars have been added to distinguish the operand being consumed:  electricity, process (natural) gas, processed oil, and coal.  Again, as choropleth maps assume steady-state behavior, a second set of slider bars has been added to animate the choropleth maps over the January 2020 - December 2021 range with a monthly time step.  Consequently, this analysis shows how the relative intensities of energy demand change from season to season.  

\begin{figure}[h!]\begin{center}\includegraphics[width=0.8\textwidth]{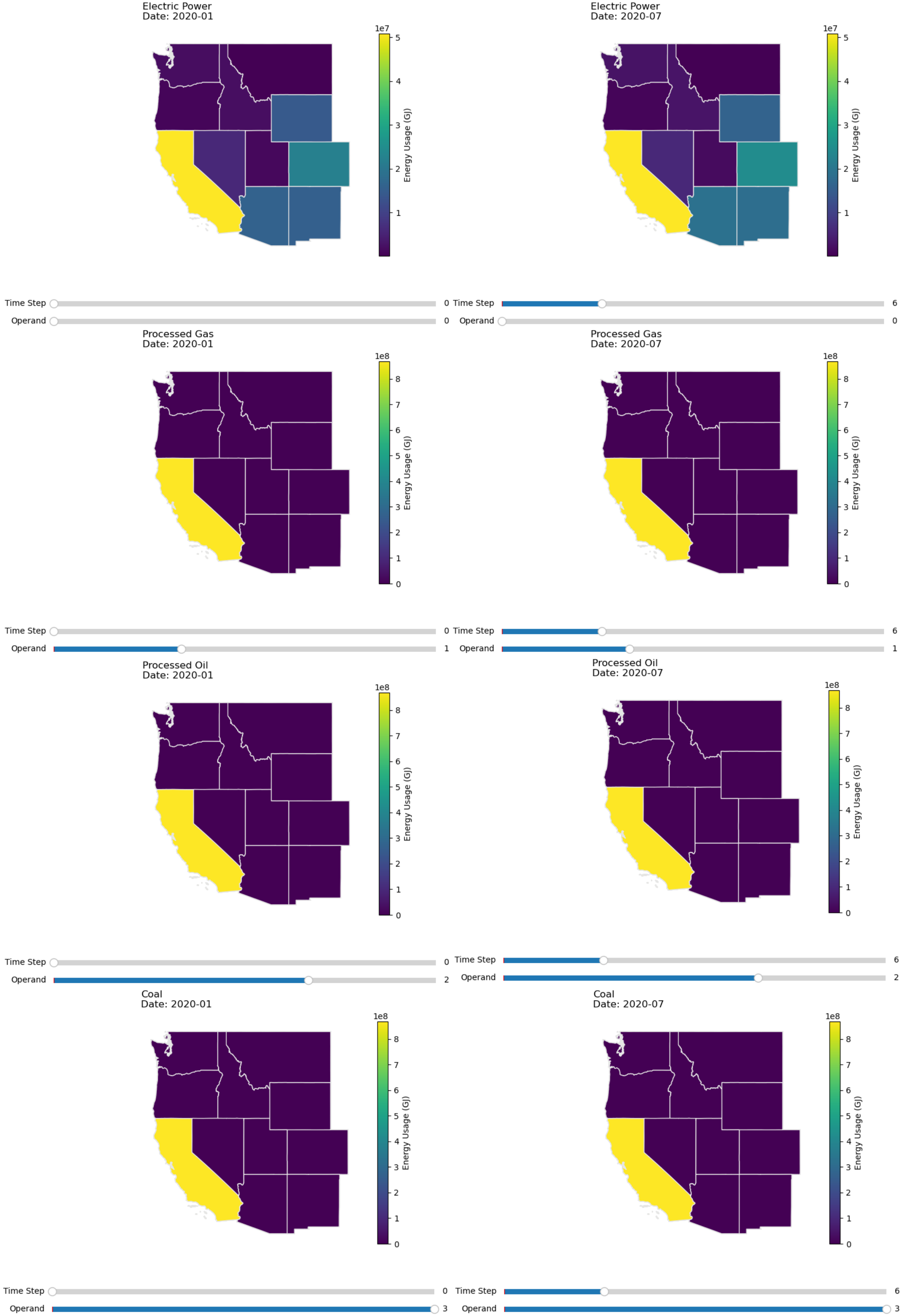}
\vspace{-0.1in}\caption{This figure presents a time series of the flows of mass and energy through the Western Interconnect using Choropleth maps.  The slider at the bottom of each snapshot labeled ``Time Step'' allows one to select which time step in the optimization they would like plotted.  This allows one to view how the flows change over time as seasons change.  The slider at the bottom of each snapshot labeled ``Operand'' allows one to select which operand in the optimization they would like plotted.}\label{WesternChoropleth}\end{center}\end{figure}

\begin{figure}[h!]\begin{center}\includegraphics[width=0.825\textwidth]{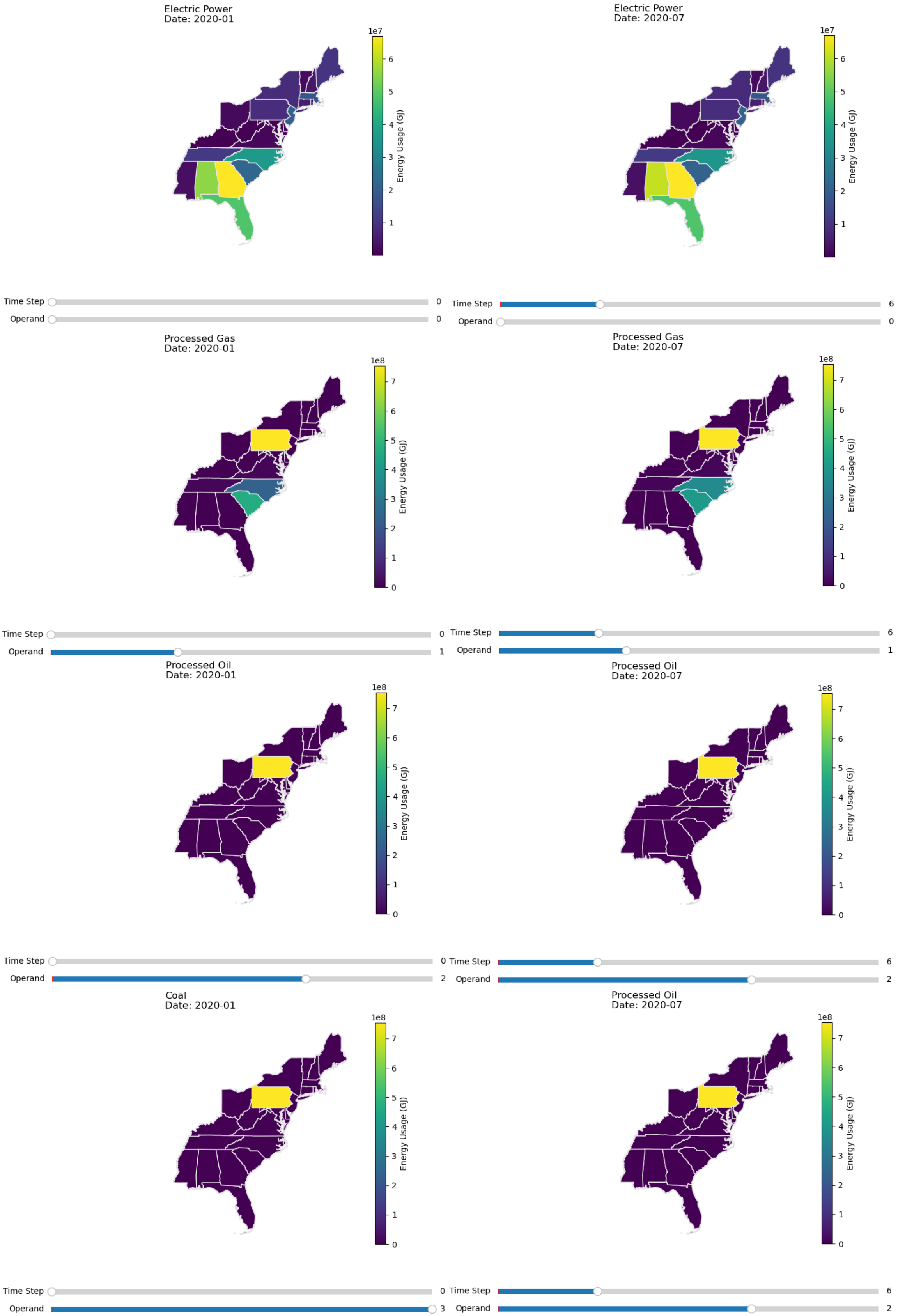}
\vspace{-0.1in}\caption{This figure presents a time series of the flows of mass and energy through the Eastern Region using Choropleth maps.  The slider at the bottom of each snapshot labeled ``Time Step'' allows one to select which time step in the optimization they would like plotted.  This allows one to view how the flows change over time as seasons change.  The slider at the bottom of each snapshot labeled ``Operand'' allows one to select which operand in the optimization they would like plotted.}\label{EasternChoropleth}\end{center}\end{figure}

In all, Figs. \ref{WesternChoropleth} and \ref{EasternChoropleth} show the relative intensities of energy demanded in all four energy infrastructures in the AMES.  As expected, California, as the most populous state in the Western Interconnect, has an energy demand across all commodities that dwarfs that of its neighbors.  Nevertheless, this predominance appears much more prominently for processed (natural) gas, processed oil, and coal than it does for electricity.  Meanwhile, in the Eastern Region, Pennsylvania shows its heavy reliance on processed (natural) gas, processed oil, and coal relative to its neighboring states.  Meanwhile, North Carolina, South Carolina, Georgia, Alabama, and Florida, with their relatively hot climates, exhibit a heavy reliance on electricity throughout the year.  

\subsection{Interstate Network Mass \& Energy Flows }\label{Sec:InterstateFlows}
\begin{figure}[h!]\begin{center}\includegraphics[width=0.6\textwidth]{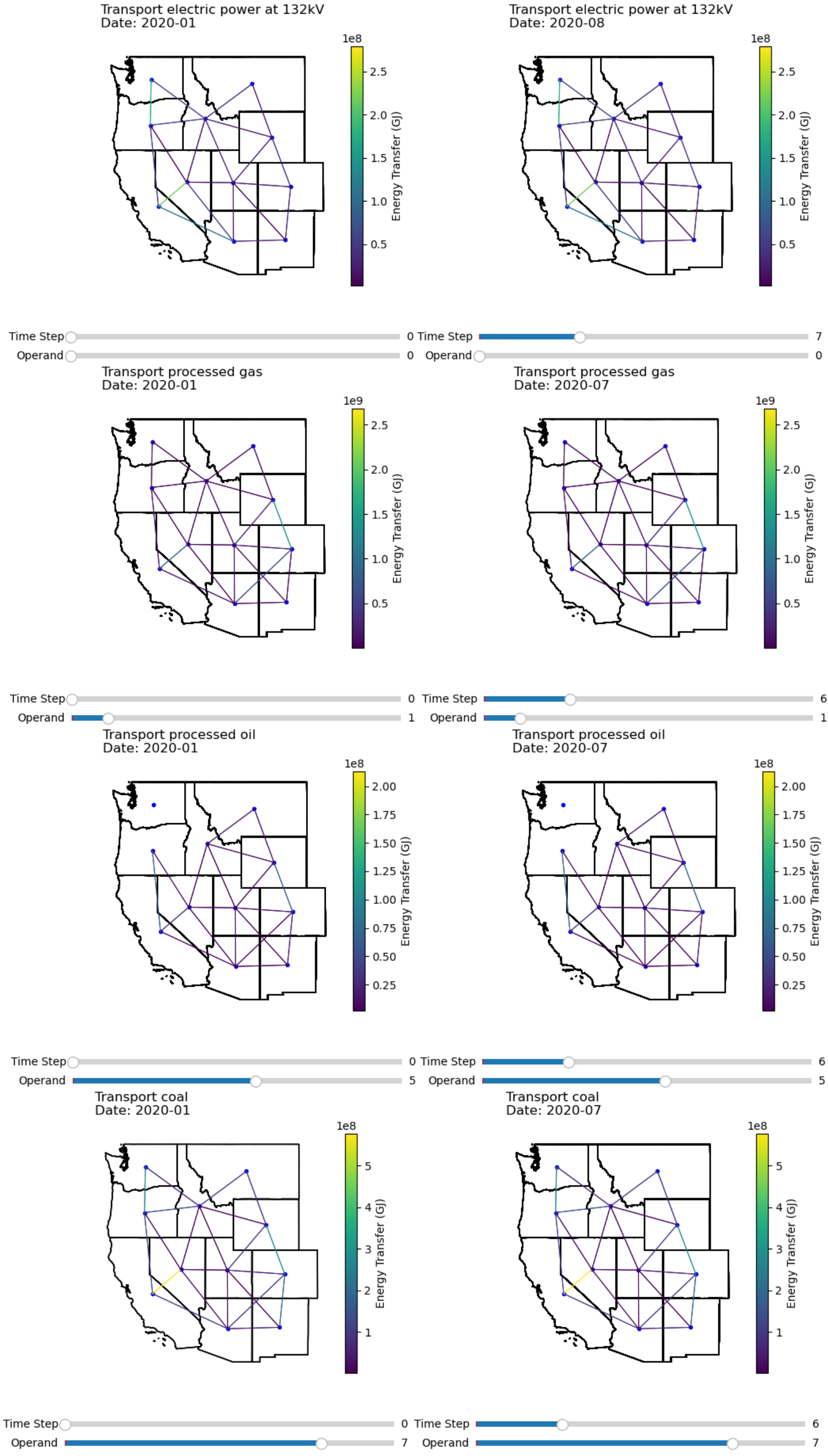}
\vspace{-0.1in}\caption{This figure presents a time series of the flows of mass and energy through the Western Interconnect between states. The slider at the bottom of each snapshot labeled ``Time Step'' allows one to select which time step in the optimization they would like plotted.  This allows one to view how the flows change over time as seasons change.  The slider at the bottom of each snapshot labeled ``Operand'' allows one to select which operand in the optimization they would like plotted.}\label{WesternInterState}\end{center}\end{figure}

\begin{figure}[h!]\begin{center}\includegraphics[width=0.60\textwidth]{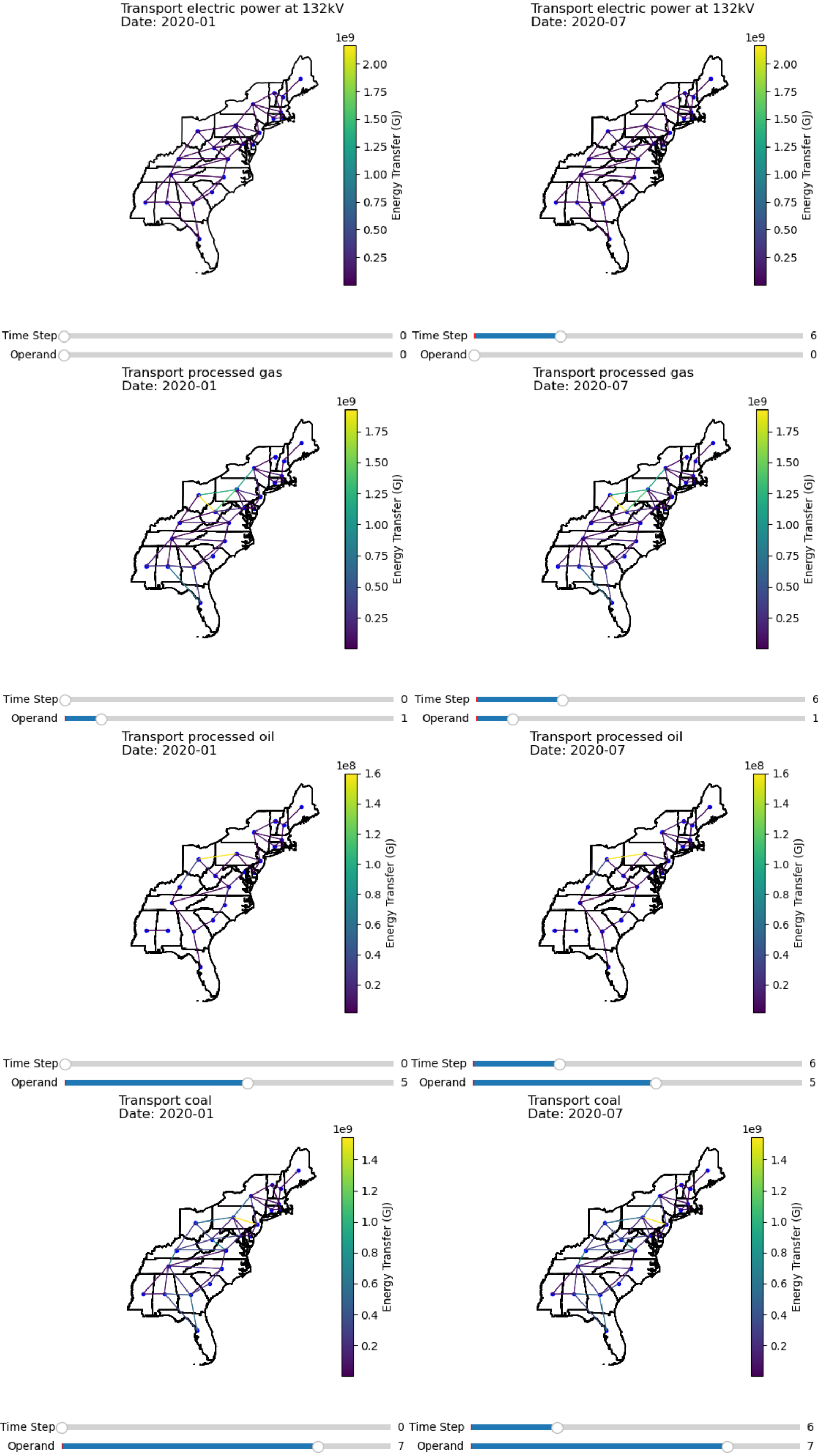}
\vspace{-0.1in}\caption{This figure presents a time series of the flows of mass and energy through the Eastern Region between states. The slider at the bottom of each snapshot labeled ``Time Step'' allows one to select which time step in the optimization they would like plotted.  This allows one to view how the flows change over time as seasons change.  The slider at the bottom of each snapshot labeled ``Operand'' allows one to select which operand in the optimization they would like plotted.}\label{EasternInterState}\end{center}\end{figure}

In addition to the Sankey diagrams and choropleth maps, Figs. \ref{WesternInterState} and \ref{EasternInterState} show the interstate network mass and energy flows in the Western Interconnect and the Eastern Region, respectively.  While the choropleth maps provide a spatial description of energy consumption, they ultimately assume that each region is independent.  In contrast, Figs. \ref{WesternInterState} and \ref{EasternInterState} emphasize the mass and energy flow between states.  Although the WLSEHFGSE problem can resolve the decision variables associated with individual capabilities pertaining to the AMES' energy assets, capability aggregation matrices have once again been used to depict the energy behavior at the chosen interstate level of granularity.  Additionally, slider bars have been added to distinguish the operand being transported:  electricity, processed (natural) gas, processed oil, and coal.  Again, as network flow diagrams present a static depiction of interstate flows, a second set of slider bars has been added to animate the network flows over the January 2020 - December 2021 range with a monthly time step.  Unlike the choropleth maps of the previous subsection, this visualization does not show a strong variation in interstate flows.  To the contrary, both Figs. \ref{WesternInterState} and \ref{EasternInterState} show that both the AMES' Western Interconnect and Eastern Region are particularly well-connected in their interstate flows of coal, oil, natural gas, and electricity.  

\subsection{Analysis of Error Sources in the WLSEHFGSE}\label{Sec:ErrorAnalysis}
In addition to the AMES' spatio-temporal behavior described in the previous section, it is also useful to analyze the sources of error in the WLSEHFGSE.  The weighted least square error objective function in Eq. \ref{Eq:WLSEOF} introduces an error component for each measurement error decision variable in Eq. \ref{Eq:ESNMeasurement3}.  Consequently, it is important to understand which input data values contribute to the accrual of error in the objective function.  Figs. \ref{WesternErrorPlots} and \ref{EasternErrorPlots} quantify these measurement errors relative to the input data values to which they pertain.  While it is possible to conduct an analysis of error sources at the level of each input data value, such an approach does not immediately reveal the types of errors that have accrued.  In Figs. \ref{WesternErrorPlots} and \ref{EasternErrorPlots} utilize capability aggregation and temporal aggregation matrices to aggregate errors originating from the same process in the same state.  Consequently, the y-axis shows the magnitude of error associated with each type of constraint on the x-axis.  Additionally, the x-axis has been sorted from greatest error on the left to least error on the right.  

\begin{figure*}[htb!]
\centering
\includegraphics[width=6in]{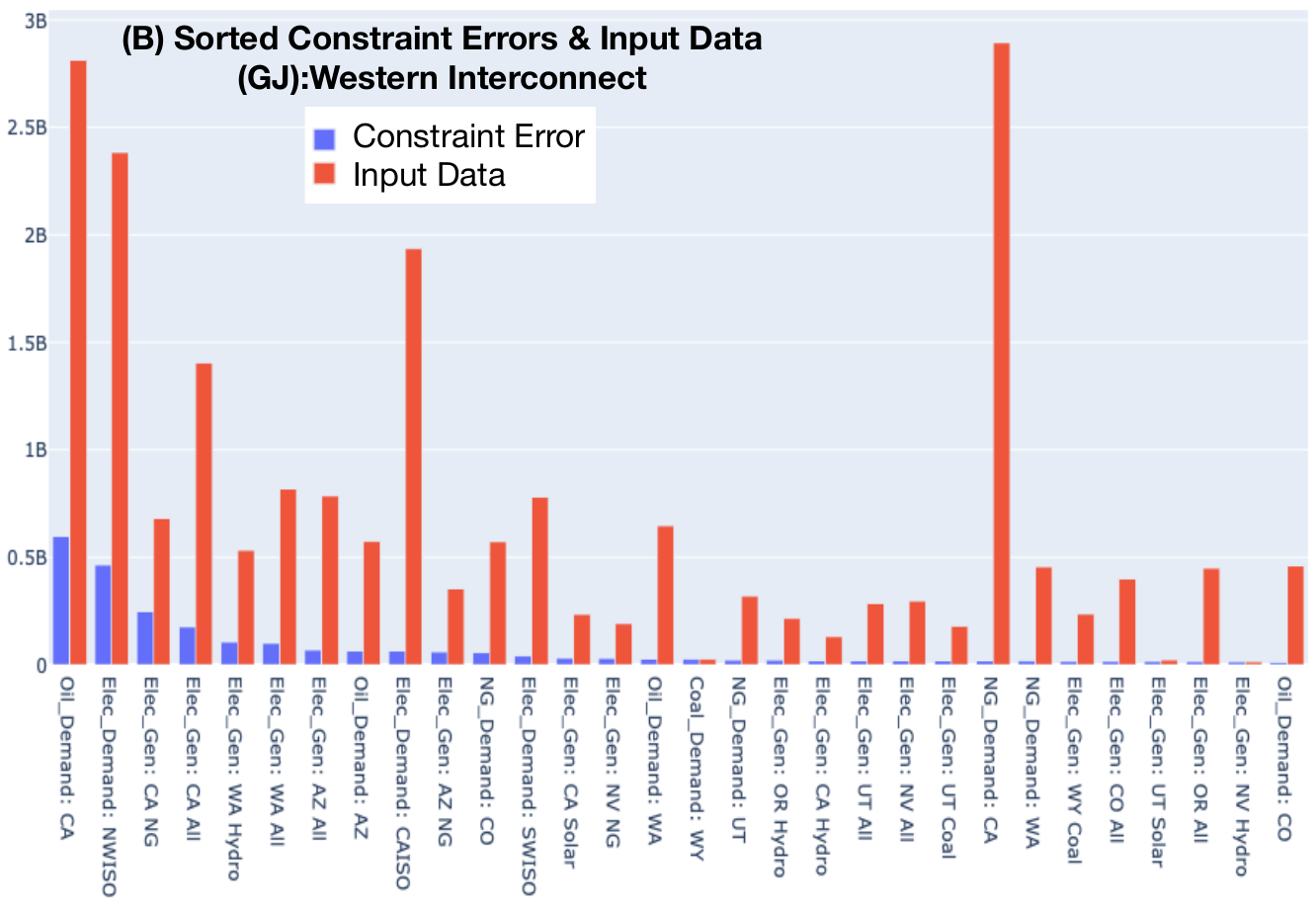}
\vspace{-0.1in}
\caption{This figure presents the error terms for the Western Interconnect case. The error contributed by each constraint on the WLSEHFGSE problem is shown next to the corresponding input data's imposed flow. }\label{WesternErrorPlots}
\vspace{-0.1in}
\end{figure*}

In the Western Interconnect, the greatest error (in blue) originates from the imposed Oil demand in CA, followed by the amount of electric demand in the NWISO, and then the electric generation in CA.  Interestingly, these measurement errors accrue because the capacity values in Eq. \ref{Eq:ESNCapacity} limit the degree to which AMES' capabilities can meet the measurement functions in Eq. \ref{Eq:ESNMeasurement3}.  Had these capacity values been omitted, as is done in the electric power system state estimator (discussed in Sec. \ref{Sec:WLSEPSSE}), many of these errors could have been reduced if not entirely eliminated.  Instead, WLSEHFGSE reveals that the measured behavioral data is not entirely consistent with the data pertaining to the AMES' structural capacities.  Nevertheless, Fig. \ref{WesternErrorPlots} shows that the error accrued in the Western Interconnect is modest relative to the magnitude of the input data values.  

\begin{figure*}[htb!]
\centering
\includegraphics[width=6in]{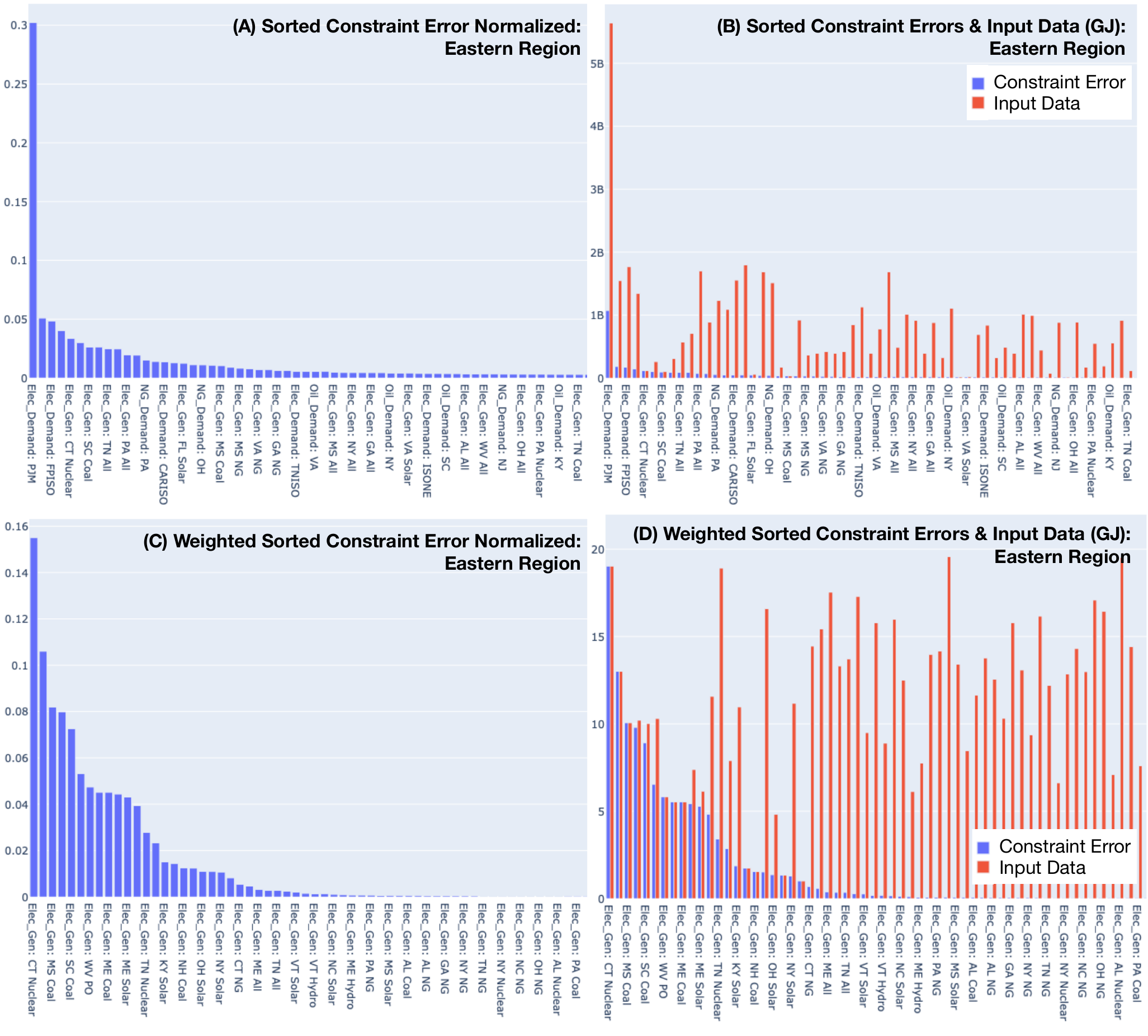}
\vspace{-0.1in}
\caption{This figure presents the error terms for the Eastern Region. The error contributed by each constraint on the WLSEHFGSE problem is shown next to the corresponding input data's imposed flow.}\label{EasternErrorPlots}
\vspace{-0.1in}
\end{figure*}

The same error trends can be seen in the Eastern Region error plot shown in Figure \ref{EasternErrorPlots}.  It is worth noting that the three most significant sources of error in the Eastern region are those corresponding to the electric power demand in PJM and FPISO, as well as the electric power generation from nuclear power in CT. Interestingly, the withdrawals of oil, natural gas, and coal do not incur errors in this optimal solution of WLSEHFGSE.  This is because there is no constraint on the total amount of mass or energy a resource can import or withdraw for these operands.  As a result, the only ways one of these input data sets could incur error would be if there were no resources to perform the import or withdrawal function, or if additional redundant but conflicting input data were introduced.  In contrast, the amount electric power generated by a power plant is constrained by its capacity value in the input data.  As a result, power generation facilities in the Eastern Region accrue errors like their counterparts in the Western Interconnect.

Finally, it is essential to acknowledge that errors in the WLSEHFGSE often originate from conflicting and redundant datasets.   Table \ref{Error-Table} shows several of the AMES' input data sets that include redundant and conflicting datasets.  Two broad categories appear immediately.  First, the imposed electric power generation from all sources and the sum of electric power generation from specific fuel sources do not match.  Second, the electric power demand for the independent system operators does not match the total generation or the summed electric power generation from specified fuel sources within those regions.  Since the demand and imposed electric power generation do not match, an error must be introduced to fill the difference.  Despite these errors, the WLSEHFGSE provides the best estimate of the AMES' state.  For example, the data discrepancy between the electric power generation values and the electric power demand values can be shared between the associated error terms.  By distributing the accumulation of error, the WLSEHFGSE problem minimizes the total error, reconciles data inconsistencies, and matches the resulting flows to the imposed input data as best as possible.  

\newpage
\scriptsize
\begin{longtable}{|c|c|c|c|c|c|}
\caption{AMES' Redundant and Conflicting Input Data}\label{Error-Table} \\
\hline
\textbf{Input Data 1} & \textbf{Input Data 2} & \textbf{Data 1 Total (GJ)} & \textbf{Data 2 Total (GJ)} & \textbf{Total Difference (GJ)} & \textbf{Sources} \\
\hline
\endfirsthead

\hline
\textbf{Input Data 1} & \textbf{Input Data 2} & \textbf{Data 1 Total (GJ)} & \textbf{Data 2 Total (GJ)} & \textbf{Total Difference (GJ)} & \textbf{Sources} \\
\hline
\endhead

\hline
\endfoot

\hline
\endlastfoot

\makecell{Generate Electric Power\\from All Sources: NY} & \makecell{Generate Electric Power\\from Summed Sources: NY} & 915,084,000 & 914,774,400 & 309,600 & \cite{EIA:2023:01}, \cite{EIA:2023:01}\\ 
\hline
\makecell{Generate Electric Power\\from All Sources: ME} & \makecell{Generate Electric Power\\from Summed Sources: ME} & 75,229,200 & 74,883,600 & 345,600 & \cite{EIA:2023:01}, \cite{EIA:2023:01}\\ 
\hline
\makecell{Generate Electric Power\\from All Sources: NH} & \makecell{Generate Electric Power\\from Summed Sources: NH} & 120,715,200 & 120,351,600 & 363,600 & \cite{EIA:2023:01}, \cite{EIA:2023:01}\\ 
\hline
\makecell{Generate Electric Power\\from All Sources: VT} & \makecell{Generate Electric Power\\from Summed Sources: VT} & 15,307,200 & 15,094,800 & 212,400 & \cite{EIA:2023:01}, \cite{EIA:2023:01}\\ 
\hline
\makecell{Generate Electric Power\\from All Sources: MA} & \makecell{Generate Electric Power\\from Summed Sources: MA} & 135,640,800 & 135,381,600 & 259,200 & \cite{EIA:2023:01}, \cite{EIA:2023:01}\\ 
\hline
\makecell{Generate Electric Power\\from All Sources: CT} & \makecell{Generate Electric Power\\from Summed Sources: CT} & 306,932,400 & 306,615,600 & 316,800 & \cite{EIA:2023:01}, \cite{EIA:2023:01}\\
\hline
\makecell{Generate Electric Power\\from All Sources: RI} & \makecell{Generate Electric Power\\from Summed Sources: RI} & 65,538,000 & 65,350,800 & 187,200 & \cite{EIA:2023:01}, \cite{EIA:2023:01}\\
\hline
\makecell{Generate Electric Power\\from All Sources: NJ} & \makecell{Generate Electric Power\\from Summed Sources: NJ} & 441,108,000 & 43,959,600 & 1,512,000 & \cite{EIA:2023:01}, \cite{EIA:2023:01}\\
\hline
\makecell{Generate Electric Power\\from All Sources: NC} & \makecell{Generate Electric Power\\from Summed Sources: NC} & 915,393,600 & 915,008,400 & 385,200 & \cite{EIA:2023:01}, \cite{EIA:2023:01}\\
\hline
\makecell{Generate Electric Power\\from All Sources: PA} & \makecell{Generate Electric Power\\from Summed Sources: PA} & 1,697,270,400 & 1,692,860,400 & 4,410,000 & \cite{EIA:2023:01}, \cite{EIA:2023:01}\\
\hline
\makecell{Generate Electric Power\\from All Sources: AL} & \makecell{Generate Electric Power\\from Summed Sources: AL} & 1,008,946,800 & 1,008,644,400 & 302,400 & \cite{EIA:2023:01}, \cite{EIA:2023:01}\\
\hline
\makecell{Generate Electric Power\\from All Sources: FL} & \makecell{Generate Electric Power\\from Summed Sources: FL} & 1,795,406,400 & 1,786,885,200 & 8,521,200 & \cite{EIA:2023:01}, \cite{EIA:2023:01}\\
\hline
\makecell{Generate Electric Power\\from All Sources: GA} & \makecell{Generate Electric Power\\from Summed Sources: GA} & 879,530,400 & 878,364,000 & 1,166,400 & \cite{EIA:2023:01}, \cite{EIA:2023:01}\\
\hline
\makecell{Generate Electric Power\\from All Sources: MD} & \makecell{Generate Electric Power\\from Summed Sources: MD} & 267,314,400 & 266,907,600 & 406,800 & \cite{EIA:2023:01}, \cite{EIA:2023:01}\\
\hline
\makecell{Generate Electric Power\\from All Sources: MS} & \makecell{Generate Electric Power\\from Summed Sources: MS} & 483,462,000 & 483,199,200 & 262,800 & \cite{EIA:2023:01}, \cite{EIA:2023:01}\\
\hline
\makecell{Generate Electric Power\\from All Sources: WV} & \makecell{Generate Electric Power\\from Summed Sources: WV} & 440,946,000 & 440,402,400 & 543,600 & \cite{EIA:2023:01}, \cite{EIA:2023:01}\\
\hline
\makecell{Generate Electric Power\\from All Sources: DE} & \makecell{Generate Electric Power\\from Summed Sources: DE} & 34,196,400 & 33,004,800 & 1,191,600 & \cite{EIA:2023:01}, \cite{EIA:2023:01}\\
\hline
\makecell{Generate Electric Power\\from All Sources: KY} & \makecell{Generate Electric Power\\from Summed Sources: KY} & 480,366,000 & 480,067,200 & 298,800 & \cite{EIA:2023:01}, \cite{EIA:2023:01}\\
\hline
\makecell{Generate Electric Power\\from All Sources: OH} & \makecell{Generate Electric Power\\from Summed Sources: OH} & 888,944,400 & 875,818,800 & 13,125,600 & \cite{EIA:2023:01}, \cite{EIA:2023:01}\\
\hline
\makecell{Generate Electric Power\\from All Sources: TN} & \makecell{Generate Electric Power\\from Summed Sources: TN} & 574,599,600 & 574,185,600 & 414,000 & \cite{EIA:2023:01}, \cite{EIA:2023:01}\\
\hline
\makecell{Generate Electric Power\\from All Sources: VA} & \makecell{Generate Electric Power\\from Summed Sources: VA} & 707,479,200 & 707,162,400 & 316,800 & \cite{EIA:2023:01}, \cite{EIA:2023:01}\\
\hline
\makecell{Generate Electric Power\\from All Sources: SC} & \makecell{Generate Electric Power\\from Summed Sources: SC} & 708,868,800 & 708,544,800 & 324,000 & \cite{EIA:2023:01}, \cite{EIA:2023:01}\\
\hline
\makecell{Generate Electric Power\\from All Sources: NY} & \makecell{Electric Power\\Demand: NYISO} & 915,084,000 & 1,087,836,447.6 & 172,752,447.6 & \cite{EIA:2023:01}, \cite{EIA:2023:00}\\
\hline
\makecell{Generate Electric Power\\from Summed Sources: NY} & \makecell{Electric Power\\Demand: NYISO} & 914,774,400 & 1,087,836,447.6 & 173,062,047.6 & \cite{EIA:2023:01}, \cite{EIA:2023:00}\\
\hline
\makecell{Generate Electric Power\\from All Sources: NE} & \makecell{Electric Power\\Demand: ISONE} & 719,362,800 & 837,803,440.8 & 118,440,640.8 & \cite{EIA:2023:01}, \cite{EIA:2023:00}\\
\hline
\makecell{Generate Electric Power\\from Summed Sources: NE} & \makecell{Electric Power\\Demand: ISONE} & 717,678,000 & 837,803,440.8 & 120,125,440.8 & \cite{EIA:2023:01}, \cite{EIA:2023:00}\\
\hline
\makecell{Generate Electric Power\\from All Sources: PJM} & \makecell{Electric Power\\Demand: PJM} & 4,957,624,800 & 5,640,611,054.4 & 682,986,254.4 & \cite{EIA:2023:01}, \cite{EIA:2023:00}\\
\hline
\makecell{Generate Electric Power\\from Summed Sources: PJM} & \makecell{Electric Power\\Demand: PJM} & 4,935,819,600 & 5,640,611,054.4 & 704,791,454.4 & \cite{EIA:2023:01}, \cite{EIA:2023:00}\\
\hline
\makecell{Generate Electric Power\\from All Sources: TN} & \makecell{Electric Power\\Demand: TNISO} & 574,599,600 & 1,127,281,017.6 & 552,681,417.6 & \cite{EIA:2023:01}, \cite{EIA:2023:00}\\
\hline
\makecell{Generate Electric Power\\from Summed Sources: TN} & \makecell{Electric Power\\Demand: TNISO} & 574,185,600 & 1,127,281,017.6 & 552,681,417.6 & \cite{EIA:2023:01}, \cite{EIA:2023:00}\\
\hline
\makecell{Generate Electric Power\\from All Sources: SOCO} & \makecell{Electric Power\\Demand: SOCO} & 1,888,477,200 & 1,685,908,461.6 & 202,568,738.4 & \cite{EIA:2023:01}, \cite{EIA:2023:00}\\
\hline
\makecell{Generate Electric Power\\from Summed Sources: SOCO} & \makecell{Electric Power\\Demand: SOCO} & 1,887,008,400 & 1,685,908,461.6 & 201,099,938.4 & \cite{EIA:2023:01}, \cite{EIA:2023:00}\\
\hline
\makecell{Generate Electric Power\\from All Sources: CARISO} & \makecell{Electric Power\\Demand: CARISO} & 1,624,262,400 & 1,554,163,243.2 & 70,099,156.8 & \cite{EIA:2023:01}, \cite{EIA:2023:00}\\
\hline
\makecell{Generate Electric Power\\from Summed Sources: CARISO} & \makecell{Electric Power\\Demand: CARISO} & 1,623,553,200 & 1,554,163,243.2 & 69,389,956.8 & \cite{EIA:2023:01}, \cite{EIA:2023:00}\\
\hline
\makecell{Generate Electric Power\\from All Sources: FPISO} & \makecell{Electric Power\\Demand: FPISO} & 1,795,406,400 & 1,764,674,834.4 & 30,731,565.6 & \cite{EIA:2023:01}, \cite{EIA:2023:00}\\
\hline
\makecell{Generate Electric Power\\from Summed Sources: FPISO} & \makecell{Electric Power\\Demand: FPISO} & 1,786,885,200 & 1,764,674,834.4 & 22,210,365.6 & \cite{EIA:2023:01}, \cite{EIA:2023:00}\\
\hline
\end{longtable}

\normalsize

\section{Conclusion}\label{ConclusionsCh7}

This paper uses a data-driven, MBSE-guided approach to develop the first open-source, asset-level granularity, behavioral model of the entire American Multi-Modal Energy System.  More specifically, socio-economic data on the flows of mass and energy through the AMES from the EIA are applied to the AMES reference architecture\cite{Thompson:2023:00} and instantiated architecture \cite{Thompson:2022:00} to create a behavioral model of the AMES.  The instantiated structural model includes the electric grid, the natural gas system, the oil system, the coal system, and the interconnections between them as defined by the AMES reference architecture, for the full contiguous United States of America (USA).  This behavioral model of the AMES recovers the relative weights of process flows and uses animated Sankey diagrams to visualize their evolution from month to month.  It also recovers the relative intensities of energy supply and demand and uses animated choropleth maps to visualize their evolution from month to month.   It also recovers the interstate network flows and uses interstate network maps to visualize their evolution from month to month.  In all three cases, these animated visualizations present aggregated results that draw upon the asset-level granularity of the AMES model.  Finally, this AMES model quantifies the weighted least square error associated with the input data.  Importantly, this AMES behavioral model is able to reconcile redundant and conflicting data on the EIA website.  

The AMES behavioral model relies on a weighted least square error hetero-functional graph state estimator (WLSEHFGSE) as its methodological underpinning.   The WLSEHFGSE developed here is the first to integrate state estimation methods with hetero-functional graph theory.  With respect to HFGT, the WLSEHFGSE is presented as a formal generalization of the hetero-functional network minimum cost flow (HFNMCF) problem (introduced in Sec. \ref{Sec:HFNMCF}).  Meanwhile, state estimation has played a major role in the operation and development of the American Electric Power System.  This work also proves that the WLSEHFGSE is a generalization of the weighted-least square error electric power system state estimator (WLSEEPSSE) problem (introduced in Sec.\ref{Sec:WLSEPSSE}).  As such, it extends electric power system state estimation to the heterogeneous multi-commodity conversions and flows in systems-of-systems like the AMES.   Furthermore, the WLSEHFGSE is fully extensible to the number of input datasets, number of systems in the systems-of-systems, the number assets, and the number of their associated capabilities.  Ultimately, the WLSEHFGSE allows direct comparisons in future studies between structural \cite{Thompson:2022:00,Thompson:2022:ISC-C78} and behavioral analyses; especially as these energy systems evolve.

Just as electric power system state estimation has played a major role in electric power system operation and markets, the AMES' state estimator can be expected to play an even greater role.  The evolution of the sustainable energy transition necessitates a profound understanding of critical energy infrastructures; their structure, behavior, and their coordination.  Such coordination prevents  cascading failures across the system-of-systems and facilitates optimal multi-commodity operations.  Through case studies in the Western Interconnect and the the Eastern Region, this work demonstrates how state estimation can be extended from the American electric power system to the AMES and still retain asset-level granularity.  The AMES' state estimator opens a plethora of future analyses.  These include resolving the EIA's target energy mixture and demands\cite{EIA:2020:00, EIA:2017:00, EIA:2016:00, EIA:2015:00, EIA:2014:12} to an asset level, responding to N-1 contingency scenarios, or analyzing new mult-energy system architectures that emphasize distribution, resilience, and technological innovation.  Ultimately, the WLSEHFGSE of the AMES has the potential to directly support and even enhance the EIA's Annual Energy Outlooks\cite{EIA:2020:00, EIA:2017:00, EIA:2016:00, EIA:2015:00, EIA:2014:12}.

\section*{Acknowledgments}\label{Sec:Acknowledgments}
This research is based on work supported by the National Science Foundation under Grant Numbers 2310638. 

\bibliographystyle{IEEEtran}
\bibliography{AMES_ML_Lib.bib}

\end{document}